\theoremstyle{theorem}
\newtheorem{theorem}{Theorem}
\newtheorem{lemma}{Lemma}
\newtheorem{corollary}{Corollary}[lemma]
\theoremstyle{remark}
\newtheorem*{remark}{Remark}
\theoremstyle{plain}
\theoremstyle{definition}
\newcommand{\hil}{\mathcal H}
\newcommand{\sw}{\mathrm{SWAP}}
\begin{document}

\preprint{APS/123-QED}

\title{Construction of genuinely entangled multipartite subspaces from bipartite ones by reducing the total number of separated parties
}

\author{K. V. Antipin}
 \email{kv.antipin@physics.msu.ru}
 \affiliation{Faculty of Physics, M. V. Lomonosov Moscow State University,\\ Leninskie gory, Moscow 119991, Russia}

\date{\today}

\begin{abstract}
Construction of genuinely entangled multipartite subspaces with certain characteristics has become a relevant task in various branches of quantum information. Here we show that such subspaces can be obtained from an arbitrary collection of bipartite entangled subspaces under joining of their adjacent subsystems. In addition, it is shown that direct sums of such constructions under certain conditions are genuinely entangled. These facts are then used in detecting entanglement of tensor products of mixed states and constructing subspaces that are distillable across every bipartite cut, where for the former application we include an example with the analysis of genuine entanglement of a tripartite state obtained from two Werner states.  
\end{abstract}

\keywords{Entangled subspace; genuine multipartite entanglement; quantum channel; tensor network}
\maketitle


\section{Introduction}

Entangled subspaces have become an object of intensive research in recent years due to their  potential utility in the tasks of  quantum information processing. Ref.~\cite{Parth04}, the work by K. R. Parthasarathy, where completely entangled subspaces~(CESs) were described, can be thought of as a starting point for developing this direction. CESs are subspaces that are free of fully product vectors. This concept  was later generalized to genuinely entangled subspaces~(GESs)~\cite{DemAugWit18,CMW08} -- those entirely composed of states in which entanglement is present in every bipartite cut of a compound system.

Genuine multipartite entanglement~(GME), being the strongest form of entanglement, has found many applications in quantum protocols~\cite{YeCh06,MP08,MEO18}. In this connection genuinely entangled subspaces  are useful since they can serve as a source of GME states. As an example, it is known that any state entirely supported on GME is genuinely entangled. Another example is connected with detection of genuine entanglement: a state having significant overlap with a GES is genuinely entangled~\cite{DemAugQut19,KVAnt21}, and certain entanglement measures can be estimated for such a state~\cite{KVAnt21}. There are also some indications that GESs can be used in quantum cryptography~\cite{SheSrik19} and quantum error correction~\cite{HuGra20}.

There are several approaches to construction of GESs~\cite{DemAugWit18, AgHalBa19, DemAug20, KVAnt21,Dem21}, including those of maximal possible dimensions. While the problem of constructing maximal GESs for any number of parties and any local dimensions  seems to be solved recently in Ref.~\cite{Dem21}, it is of significant interest to build entangled subspaces with certain useful for quantum protocols characteristics such as given values of entanglement measures, distillability property, robustness of entanglement under external noise, etc. It is the task we concentrate on in the present paper, following the path of compositional construction started in Ref.~\cite{KVAnt21}. We investigate a special operation when bipartite completely entangled subspaces are combined together with the use of tensor products with subsequent joining  the adjacent subsystems~(parties). We show that such an operation can generate GESs and that its compositional character together with the freedom of choice of the input subspaces opens the possibility to control the parameters of the output GESs. Such construction can be relevant for  quantum networks~\cite{Sim17,BFD19,KSYG21}. In particular, when two states are combined, this operation corresponds to the star configuration~\cite{TenNet2021}. Combination of two subspaces in turn can be associated with a superposition of several quantum networks.

The paper is structured as follows. In Section~\ref{sec::prel} we give necessary definitions and provide some mathematical background. In Section~\ref{sec::main} the main lemmas concerning the properties of tensor products of entangled subspaces are stated and proved. In Section~\ref{sec::app} it is shown how the established properties can be applied in several tasks such as constructing GESs with certain useful properties, detecting  entanglement of tensor products of mixed states. In Section~\ref{sec::disc} we conclude and propose possible directions of further research.

\section{Preliminaries}\label{sec::prel}

Throughout this paper we consider finite dimensional Hilbert spaces and their tensor products. We begin with more precise definitions of entangled states and subspaces.

A pure $n$-partite state is \emph{entangled} if it cannot be written as a tensor product of states for every subsystem, i.~e.,
\begin{equation}
    \ket{\psi}\ne\ket{\phi}_1\otimes\ldots\otimes\ket{\phi}_n.
\end{equation}

A \emph{bipartite cut}~(\emph{bipartition}) $A|\bar A$ of an $n$-partite state is defined by specifying  a subset $A$ of the set of $n$ parties as well as its complement $\bar A$ in this set.

A pure $n$-partite state $\ket{\psi}$ is called \emph{biseparable} if it can be written as a tensor product
\begin{equation}
    \ket{\psi} = \ket{\phi}_A\otimes\ket{\chi}_{\bar{A}}
\end{equation}
with respect to some bipartite cut $A|\bar A$. On the contrary, a multipartite pure state is called \emph{genuinely entangled} if it is not biseparable with respect to any bipartite cut.

Similarly, a mixed  multipartite state is called \emph{biseparable} if it can be decomposed into a convex sum of biseparable pure states, not necessarily  with respect to the same bipartite cut. In the opposite case it is called \emph{genuinely  entangled}. 

A subspace of a multipartite Hilbert space is called \emph{completely entangled}~(CES) if it consists only  of entangled states. A \emph{genuinely  entangled subspace}~(GES) is a subspace composed entirely of genuinely entangled  states.

Next we recall some measures of entanglement.

The geometric measure of entanglement of a bipartite pure state $\ket{\psi}$ is defined by
\begin{equation}\label{geom}
    G(\psi) \coloneqq 1 - \max_i\{\lambda_i\},
\end{equation}
where $\lambda_i$ is the $i$-th Schmidt coefficient squared as in the Schmidt decomposition $\ket{\psi} = \sum_i\,\sqrt{\lambda_i}\ket{i}\otimes\ket{i}$. This measure is generalized~\cite{Guhnetall20} to detect genuine multipartite entanglement as
\begin{equation}\label{geomGME}
    G_{GME}(\psi) \coloneqq \min_{A|\bar A} G_{A|\bar A}(\psi),
\end{equation}
where the minimization runs over all possile bipartite cuts $A|\bar A$ and $G_{A|\bar A}(\psi)$ -- the geometric measure~(\ref{geom}) with respect to bipartite cut $A|\bar A$.

For mixed multipartite states the geometric measure of genuine entanglement is defined via the convex roof construction
\begin{equation}\label{cfGME}
    G_{GME}(\rho) \coloneqq \min_{\{(p_j,\,\psi_j)\}}\,\sum_j\,p_j\,G_{GME}(\psi_j),
\end{equation}
where the minimum is taken  over all  ensemble decompositions $\rho = \sum_j p_j\dyad{\psi_j}$.

To quantify entanglement of a subspace $\mathcal S$, we will use the entanglement measure $EM$ of its least entangled vector:
\begin{equation}
   EM(\mathcal S) \coloneqq  \min_{\ket{\psi}\in\mathcal S}EM(\psi).
\end{equation}
In place of $EM$ here can be used the geometric measure $G_{A|\bar A}$ across a specific bipartite cut, as well as the genuine entanglement measure $G_{GME}$ of Eq.~(\ref{geomGME}).

We proceed to quantum channels and their connections with entangled subspaces.

Let $\mathcal L(\hil)$ denote the set of all linear operators on  $\hil$. 
A \emph{quantum channel} $\Phi_{A\rightarrow B}$ is a linear, completely positive and trace-preserving map between $\mathcal L(\hil_A)$ and $\mathcal L(\hil_B)$ \cite{Wilde13}, for two finite dimensional Hilbert spaces $\hil_A$ and $\hil_B$.

A crucial property used in the present work is the  correspondence between quantum channels and linear subspaces of composite Hilbert spaces~\cite{AubSz17}. Consider an isometry $V\,\colon\, \hil_A\rightarrow \hil_B\otimes \hil_C$ whose range is $W$, some subspace of $\hil_B\otimes\hil_C$. The corresponding quantum channel $\mathrm\Phi_{A\rightarrow B}\colon\,\mathcal L(\hil_A)\rightarrow\mathcal L(\hil_B)$ can be introduced by
\begin{equation}\label{IsoRep}
    \mathrm\Phi_{A\rightarrow B}(\rho) = \mathrm{Tr}_{\hil_C} (V\rho V^{\dagger}).
\end{equation}
If we trace out subsystem $B$ instead, a \emph{complementary}~\cite{ShorDev} to $\Phi$ quantum channel $\Phi^C_{A\rightarrow C}$ is obtained:
\begin{equation}\label{IsoRepCom}
    \mathrm\Phi^C_{A\rightarrow C}(\rho) = \mathrm{Tr}_{\hil_B} (V\rho V^{\dagger}).
\end{equation}
The correspondence works in the opposite direction as well: by Stinespring's dilation theorem~\cite{Stine55}, for any channel $\mathrm\Phi_{A\rightarrow B}$ there exists some subspace $W\subset \hil_{B}\otimes \hil_{C}$ such that $\mathrm\Phi_{A\rightarrow B}$ is determined by Eq.~(\ref{IsoRep}).

Eqs.~(\ref{IsoRep}) and (\ref{IsoRepCom}) are represented diagrammatically on Fig.~\ref{fig:chancom}. In this paper we use tensor diagram notation and the corresponding tools for diagrammatic reasoning from Ref.~\cite{CoeKis17}, which include the discarding symbol depicting tracing out a particular subsystem and various line deformations denoting linear algebra operations.   Refs.~\cite{BiamonteEtAll15, Biamonte19} are also good sources on  application of tensor diagrams in quantum information theory.

\begin{figure}[t]
\includegraphics[scale=0.30]{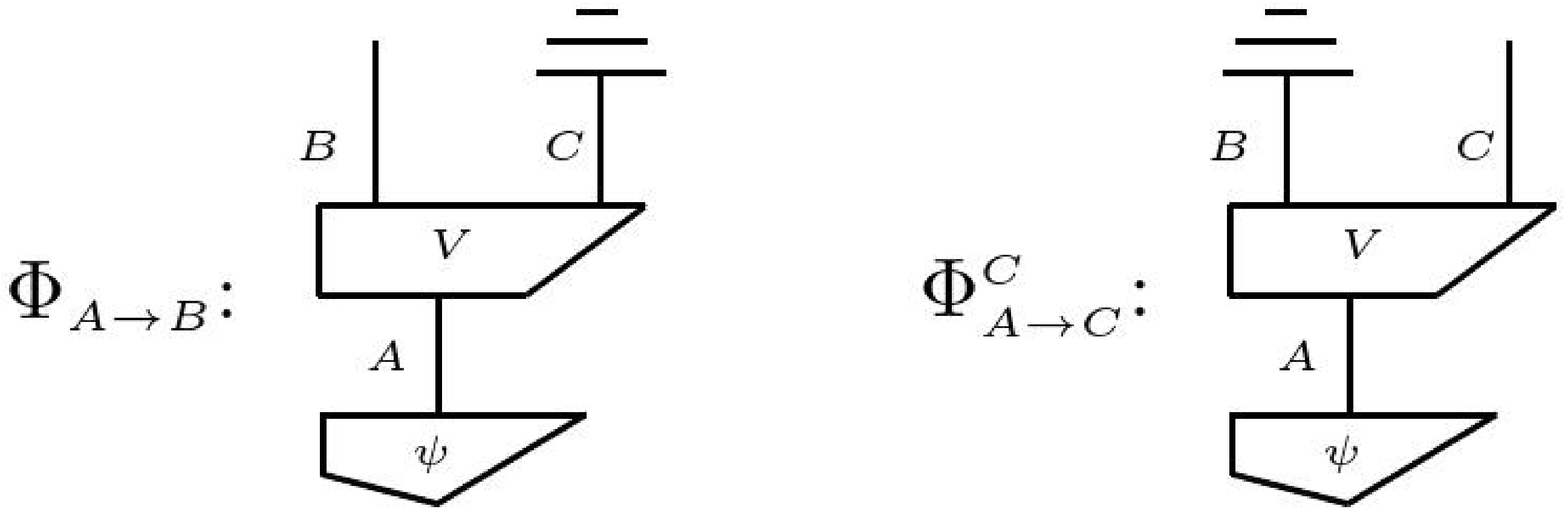}
\caption{\label{fig:chancom} A representation of channel $\mathrm\Phi_{A\rightarrow B}$ and its complementary channel $\Phi^C_{A\rightarrow C}$ both acting on a pure state $\ket{\psi}\in\hil_A$: the isometry $V$ takes the state  to $\hil_B\otimes \hil_C$, then one of the two subsystems is traced out~(which is  denoted by the  discarding symbol).}
\end{figure}

An important characteristic of a quantum channel $\Phi$ is the \emph{maximal output norm}~\cite{AmHolWer2000} defined by
\begin{equation}\label{OutNorm}
    \nu_p(\mathrm\Phi) = \sup_{\rho\in\mathcal D(\hil)}\norm{\mathrm\Phi(\rho)}_p,\,\,p>1,
\end{equation}
where $\norm{\rho}_p = (\mathrm{Tr}(\abs{\rho}^p))^{1/p}$ is the  $p$-norm and $\mathcal D(\hil)$ is the set of density operators on $\hil$. The supremum in Eq.~(\ref{OutNorm}) can be taken over pure input states due to  convexity of the $p$-norm. The quantity $\nu_p(\mathrm\Phi)$ also characterizes the entanglement of the subspace $W$ corresponding to the channel $\Phi$: $W$ is completely entangled iff $\nu_p(\mathrm\Phi) < 1$.

\begin{figure}[t]
\includegraphics[scale=0.35]{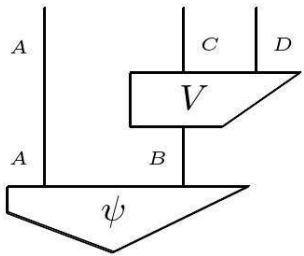}
\caption{\label{fig:iso23} An isometry $V$ acting on  subsystem $B$ of  a pure bipartite  state $\ket{\psi}$ from a completely entangled subspace of a tensor product Hilbert space $\hil_A\otimes\hil_B$. Acting of a properly chosen isometry on each state in the subspace generates a genuinely entangled subspace of a tripartite Hilbert space $\hil_A\otimes\hil_C\otimes\hil_D$. }
\end{figure}

Let us mention another crucial property concerning the maximal output norm. Consider a product channel $I\otimes\Phi$, where $I$ is the identity map~(the ideal channel). Then 
\begin{equation}\label{idealcomb}
    \nu_p(I\otimes\Phi) = \nu_p(\Phi),\quad 1\leqslant p\leqslant\infty.
\end{equation}
It was proved in Ref.~\cite{AmHolWer2000}.

Ref.~\cite{KVAnt21} provides a simple approach to constructing tripartite genuinely entangled subspaces with the use of composition of bipartite completely entangled subspaces and quantum channels of certain types. The approach is presented on Fig.~\ref{fig:iso23}, where an isometry $V$ is acting on one of the two subsystems of each state from a completely entangled subspace of $\hil_A\otimes\hil_B$. It was shown that, when the isometry corresponds to a quantum channel $\Phi$ with $\nu_p(\mathrm\Phi) < 1$ for  $p>1$~(i.~e., the isometry has a CES as its range), a genuinely entangled subspace of $\hil_A\otimes\hil_C\otimes\hil_D$ is generated.

Interestingly enough, there are other types of isometries that can generate GESs via the scheme on Fig.~\ref{fig:iso23}, and they don't necessarily have completely entangled ranges. In the present paper, though, we will use those of the described above type.

There will be a lot of joining of subsystems in the present paper. Let $A$ and $B$ be two systems with Hilbert spaces $\hil_A$ and $\hil_B$, respectively, and $\dim(\hil_A)=d_A$, $\dim(\hil_B)=d_B$.   Let $C$ be a larger system such that $\dim(\hil_C)=d_A d_B$. We say that   $A$ and $B$ are joined into  $C=AB$ if, given fixed computational bases $\{\ket{i}_A\}$ and $\{\ket{j}_B\}$ of $\hil_A$ and $\hil_B$ respectively, there is a mapping between the product basis of $\hil_A\otimes\hil_B$ and  a fixed computational basis $\{\ket{k}_C\}$ of $\hil_C$:
\begin{equation}
    \ket{i}_A\otimes\ket{j}_B\rightarrow\ket{k'}_C,\quad
    k'=i\,d_B  + j,
\end{equation}
i.~e., the bases are joined in the lexicographic order.
The mapping is extended on all other vectors of $\hil_A\otimes\hil_B$ by linearity.

\section{\label{sec::main}Entangled states and subspaces from tensor product}
We begin the section with a simple observation.
\begin{lemma}
Let $\ket{\phi}_{AB_1}$ and $\ket{\chi}_{B_2C}$ be two pure bipartite entangled states on $\hil_A \otimes\hil_{B_1}$ and $\hil_{B_2}\otimes\hil_C$, respectively. Let $\ket{\psi}_{ABC}$ be a tripartite pure state on $\hil_A\otimes\hil_B\otimes\hil_C$ that is obtained from taking the tensor product $\ket{\phi}_{AB_1}\otimes\ket{\chi}_{B_2C}$ with subsequent joining  subsystems $B_1$ and $B_2$ into a larger one, $B=B_1B_2$~(see Fig.~\ref{fig:twostates}). Then $\ket{\psi}_{ABC}$ is genuinely entangled.
\end{lemma}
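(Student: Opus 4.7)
The plan is to verify entanglement across each of the three bipartite cuts $A|BC$, $C|AB$, and $B|AC$ separately. Since joining $B_1$ and $B_2$ into the single party $B$ is a basis relabeling that keeps them on the same side of every cut of the tripartite system $ABC$, it is equivalent to analyze the product vector $\ket{\phi}_{AB_1}\otimes\ket{\chi}_{B_2 C}$ regarded as a four-partite state on $A|B_1|B_2|C$, restricted to the three coarse-grained cuts $A|B_1 B_2 C$, $A B_1 B_2|C$, and $B_1 B_2|AC$.

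For the first two cuts the claim is immediate. Across $A|B_1 B_2 C$, the factor $\ket{\chi}_{B_2 C}$ lies entirely on the right side and does not change the Schmidt rank, so the rank equals that of $\ket{\phi}$ across $A|B_1$, which is at least $2$ by assumption. By symmetry, the same argument applied to $\ket{\chi}$ handles the cut $A B_1 B_2 | C$.

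The only cut that actually mixes the two ingredients is $B_1 B_2 | AC$. Here I would expand in the Schmidt decompositions $\ket{\phi} = \sum_i \sqrt{p_i}\,\ket{a_i}_A \ket{b_i}_{B_1}$ and $\ket{\chi} = \sum_j \sqrt{q_j}\,\ket{b'_j}_{B_2}\ket{c_j}_C$, where by entanglement each sum has at least two nonzero coefficients. Regrouping factors gives
\begin{equation*}
\ket{\phi}\otimes\ket{\chi} \;=\; \sum_{i,j}\sqrt{p_i q_j}\,\bigl(\ket{b_i}_{B_1}\ket{b'_j}_{B_2}\bigr)\otimes\bigl(\ket{a_i}_A\ket{c_j}_C\bigr),
\end{equation*}
and the families $\{\ket{b_i}\otimes\ket{b'_j}\}_{i,j}$ in $\hil_{B_1}\otimes\hil_{B_2}$ and $\{\ket{a_i}\otimes\ket{c_j}\}_{i,j}$ in $\hil_A\otimes\hil_C$ are orthonormal, being built from orthonormal Schmidt bases. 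Hence this expression is already a Schmidt decomposition across $B_1 B_2|AC$, with rank equal to the product of the individual ranks, so at least $4>1$. Therefore $\ket{\psi}_{ABC}$ is entangled across every bipartite cut, i.e.\ genuinely entangled.

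The result is almost immediate, so the main obstacle is really just the bookkeeping: one must be explicit that the basis rearrangement induced by joining $B_1 B_2\to B$ respects Schmidt structures across the three cuts of $ABC$, and that the orthonormality of the ``crossed'' bases on the two sides of the $B|AC$ cut follows directly from orthonormality of the original Schmidt bases. No deeper machinery, such as maximal output norms or the channel correspondence recalled in the preliminaries, is needed for this particular lemma; that apparatus should only enter when the statement is promoted from single states to subspaces in the subsequent lemmas of Section~\ref{sec::main}.
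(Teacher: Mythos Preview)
Your proof is correct and follows essentially the same strategy as the paper: verify entanglement across each of the three bipartite cuts by elementary means. The only cosmetic difference is that the paper phrases each check as ``the reduced density matrix is mixed'' (for $B|AC$ it computes $\mathrm{Tr}_B\dyad{\psi}=\rho^{\phi}_A\otimes\rho^{\chi}_C$, a product of mixed states), whereas you phrase it as ``the Schmidt rank exceeds $1$'' via an explicit product Schmidt decomposition; these are equivalent formulations, and your version even yields the extra datum that the Schmidt rank across $B|AC$ is the product of the two input ranks.
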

\begin{proof}
One needs to check that the tripartite state is entangled across all three bipartitions $A|BC$, $B|AC$, $C|AB$, which can be conveniently seen from the diagrammatic representation. For bipartition $B|AC$, as shown  on Fig.~\ref{fig:twostatesbip}, tracing out subsystem $B$~(i. e., subsystems $B_1$ and $B_2$) results in a state equal to $\rho^{\phi}_A \otimes \rho^{\chi}_C$, where 
\begin{equation}\label{parden}
 \rho^{\phi}_A = \mathrm{Tr}_{B_1}\{\dyad{\phi}_{AB_1}\},\quad \rho^{\chi}_C = \mathrm{Tr}_{B_2}\{\dyad{\chi}_{B_2C}\}.
\end{equation}
The bipartite states  $\ket{\phi}_{AB_1}$ and $\ket{\chi}_{B_2C}$ are entangled, and hence the corresponding one party states $\rho^{\phi}_A$ and $\rho^{\chi}_C$ are mixed. As a tensor product of mixed states, the resulting state is also mixed. The other two bipartitions are analyzed similarly.
\end{proof}

\begin{figure}[t]
\includegraphics[scale=0.25]{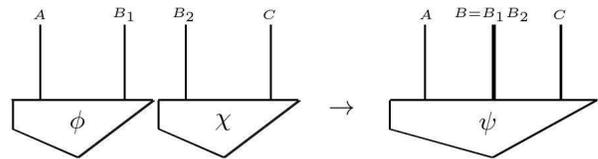}
\caption{\label{fig:twostates} Tensor product of two bipartite entangled pure states generates a tripartite genuinely entangled state after joining subsystems $B_1$ and $B_2$.}
\end{figure}

\begin{figure}[b]
\includegraphics[scale=0.23]{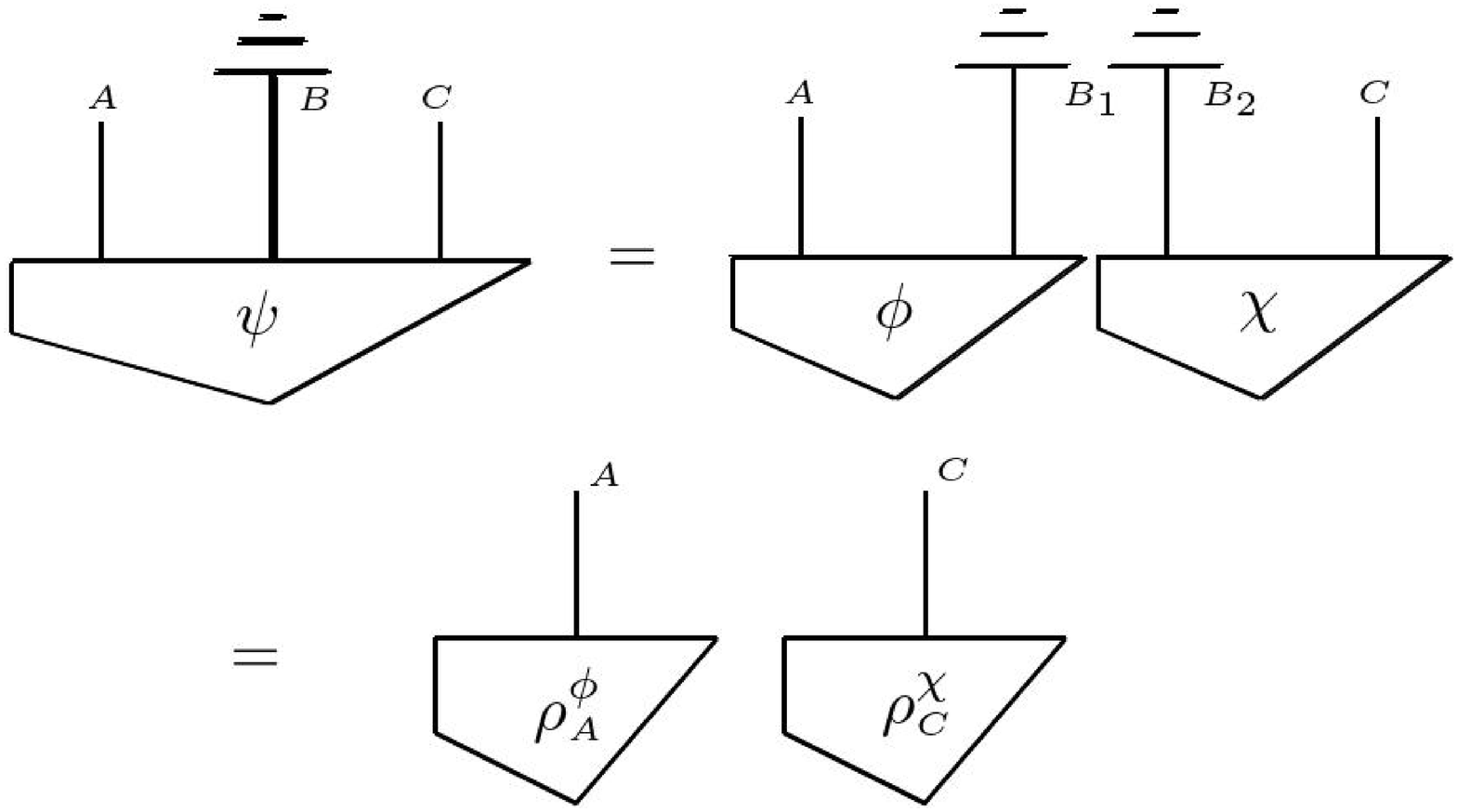}
\caption{\label{fig:twostatesbip} Entanglement in bipartition $B|AC$ of a tripartite pure state $\ket{\psi}_{ABC}$: after tracing out subsystem $B$ the resulting state is a tensor product of two mixed states $\rho^{\phi}_A$ and $\rho^{\chi}_C$.}
\end{figure}

What is more interesting is that two bipartite entangled subspaces can be combined in a similar way to generate a genuinely entangled subspace. 

\begin{lemma}\label{gesprod}
Let $\mathcal S_{AB_1}$ be a completely entangled  subspace of $\hil_A\otimes\hil_{B_1}$, and $\mathcal G_{B_2C}$ -- a completely entangled  subspace of $\hil_{B_2}\otimes\hil_C$.
Then their tensor product $\mathcal S_{AB_1}\otimes\mathcal G_{B_2C}$, after joining subsystems $B_1$ and $B_2$ into $B=B_1B_2$, is a genuinely entangled subspace of $\hil_A\otimes\hil_B\otimes\hil_C$, with the geometric measure of genuine entanglement 
\begin{equation}\label{gmeges}
     G_{GME}(\mathcal S_{AB_1}\otimes\mathcal G_{B_2C}) = \min\left(G(\mathcal S_{AB_1}),\,G(\mathcal G_{B_2C})\right).
\end{equation}
\end{lemma}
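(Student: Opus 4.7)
The plan is to use the Stinespring correspondence between the CESs and quantum channels. Let $V_1\colon\hil_X\to\hil_A\otimes\hil_{B_1}$ and $V_2\colon\hil_Y\to\hil_{B_2}\otimes\hil_C$ be isometries with ranges $\mathcal S_{AB_1}$ and $\mathcal G_{B_2C}$, and let $\Phi_1,\Phi_2^C$ be the associated channels obtained by tracing out $B_1$ and $B_2$, respectively. The CES hypothesis, combined with the definition of the geometric measure, is equivalent to $\nu_\infty(\Phi_1)=1-G(\mathcal S_{AB_1})<1$ and $\nu_\infty(\Phi_2^C)=1-G(\mathcal G_{B_2C})<1$. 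Any unit vector of the tensor product subspace can be written as $\ket\Psi=(V_1\otimes V_2)\ket\omega$ with $\ket\omega\in\hil_X\otimes\hil_Y$, and I will check that the reductions of $\ket\Psi$ across each of the three cuts $A|BC$, $B|AC$, $C|AB$ are mixed and simultaneously bound their largest eigenvalues.

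For the outer cuts $A|BC$ and $C|AB$, the isometry on the side that is entirely traced out contracts under $V_j^\dagger V_j=I$, leaving
\begin{equation}
\rho_A=\Phi_1(\mathrm{Tr}_Y\dyad\omega),\qquad \rho_C=\Phi_2^C(\mathrm{Tr}_X\dyad\omega),
\end{equation}
so $\|\rho_A\|_\infty\leq\nu_\infty(\Phi_1)=1-G(\mathcal S_{AB_1})$ and $\|\rho_C\|_\infty\leq 1-G(\mathcal G_{B_2C})$, which yields $G_{A|BC}(\Psi)\geq G(\mathcal S_{AB_1})$ and $G_{C|AB}(\Psi)\geq G(\mathcal G_{B_2C})$. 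The middle cut is delicate because tracing out $B=B_1B_2$ does not decouple the two isometries. An analogous computation gives $\rho_{AC}=(\Phi_1\otimes\Phi_2^C)(\dyad\omega)$ with $\dyad\omega$ possibly entangled across $X\otimes Y$; rewriting this as $(\Phi_1\otimes I_C)\bigl[(I_X\otimes\Phi_2^C)(\dyad\omega)\bigr]$ and invoking Eq.~(\ref{idealcomb}) gives $\|\rho_{AC}\|_\infty\leq\nu_\infty(\Phi_1\otimes I_C)=\nu_\infty(\Phi_1)<1$, and dually $\leq\nu_\infty(\Phi_2^C)$. All three reductions are therefore mixed, establishing the GES property together with the inequality $G_{GME}(\Psi)\geq\min(G(\mathcal S_{AB_1}),G(\mathcal G_{B_2C}))$ for every $\ket\Psi$ in the subspace.

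The matching upper bound in Eq.~(\ref{gmeges}) comes from a product test state $\ket\phi_{AB_1}\otimes\ket\chi_{B_2C}$ with $\ket\phi,\ket\chi$ attaining the minimum geometric measure in their respective subspaces: then $G_{A|BC}=G(\mathcal S_{AB_1})$ and $G_{C|AB}=G(\mathcal G_{B_2C})$, while the middle cut is at least $\max$ of the two because $\rho_B=\rho_{B_1}^\phi\otimes\rho_{B_2}^\chi$ has operator norm equal to the product of the individual operator norms. Taking the minimum over the three cuts for this test vector saturates the lower bound. The principal obstacle is the middle cut $B|AC$: the reduction couples both channels on an entangled input, so the ``one state at a time'' argument of Lemma~1 fails for arbitrary superpositions in $\mathcal S\otimes\mathcal G$, and the bound is rescued precisely by the identity-tensor stability of the maximal output norm, Eq.~(\ref{idealcomb}).
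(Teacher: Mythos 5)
Your proof is correct and takes essentially the same route as the paper: you represent the subspace as the image of $V_1\otimes V_2$, bound each reduction by the maximal output $\infty$-norm of the associated channel, and handle the delicate $B|AC$ cut exactly as the paper does via the stability property $\nu_\infty(I\otimes\Phi)=\nu_\infty(\Phi)$ of Eq.~(\ref{idealcomb}). Your saturation step with a product test vector attaining the minimal geometric measures is just a more explicit rendering of the paper's observation that the output norm is attained on pure (product-in-$D\otimes E$) inputs, so the two arguments coincide in substance.
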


\begin{proof}
The argument follows from diagrammatic reasoning involving the correspondence between bipartite subspaces and quantum channels.

Let $\ket{\psi_1}_{AB_1},\,\ldots,\,\ket{\psi_n}_{AB_1}$ be basis vectors in $\mathcal S_{AB_1}$, and $\ket{\chi_1}_{B_2C},\,\ldots,\,\ket{\chi_k}_{B_2C}$ -- basis vectors in $\mathcal G_{B_2C}$.  The elements  $\{\ket{\psi_i}_{AB_1}\otimes\ket{\chi_j}_{B_2C}\}$ then span $\mathcal S_{AB_1}\otimes\mathcal G_{B_2C}$.

Consider also Hilbert spaces $\hil_D$ and $\hil_E$ with $\dim(\hil_D)=\dim(\mathcal S_{AB_1})$, $\dim(\hil_E)=\dim(\mathcal G_{B_2C})$ and basis states $\ket{\mu_1}_{D},\,\ldots,\,\ket{\mu_n}_{D}$ and $\ket{\nu_1}_{E},\,\ldots,\,\ket{\nu_k}_{E}$, respectively.

Let $V_1\colon\,\hil_{D}\rightarrow\hil_{A}\otimes\hil_{B_1}$ be an isometry that maps the states $\{\ket{\mu_i}_{D}\}$ to the states $\{\ket{\psi_i}_{AB_1}\}$, and $V_2\colon\,\hil_{E}\rightarrow\hil_{B_2}\otimes\hil_{C}$ -- an isometry mapping $\{\ket{\nu_j}_{E}\}$ to $\{\ket{\chi_j}_{B_2C}\}$.
The ranges of $V_1$ and $V_2$ are then the completely entangled subspaces $\mathcal S_{AB_1}$ and $\mathcal G_{B_2C}$, respectively.

A particular element $\ket{\psi_i}_{AB_1}\otimes\ket{\chi_j}_{B_2C}$ of $\mathcal S_{AB_1}\otimes\mathcal G_{B_2C}$ can be written as
\begin{multline}
    \ket{\psi_i}_{AB_1}\otimes\ket{\chi_j}_{B_2C} \\
    = \left(V_1\otimes V_2\right)\left(\ket{\mu_i}_{ D}\otimes\ket{\nu_j}_{E}\right),
\end{multline}
and hence the whole subspace $\mathcal S_{AB_1}\otimes\mathcal G_{B_2C}$ can be presented  as the result of action of the isometry $V_1\otimes V_2$ on each state from the tensor product Hilbert space $\hil_D\otimes\hil_E$ spanned by $\{\ket{\mu_i}_{D}\otimes\ket{\nu_j}_{E}\}$~(see Fig.~\ref{fig:gesgenod}).

\begin{figure}[t]
\includegraphics[scale=0.35]{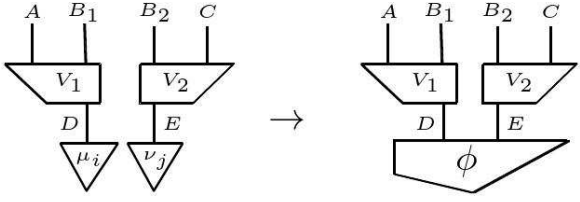}
\caption{\label{fig:gesgenod} To the left: action of the isometry $V_1\otimes V_2$ on a particular basis state $\ket{\mu_i}_{D}\otimes\ket{\nu_j}_{E}$ of $\hil_D\otimes\hil_E$. To the right: action of the isometry $V_1\otimes V_2$ on a general  state $\phi$ from $\hil_D\otimes\hil_E$, which is equal to a linear combination  of basis states $\{\ket{\mu_i}_{D}\otimes\ket{\nu_j}_{E}\}$. Action of   $V_1\otimes V_2$ on each state in $\hil_D\otimes\hil_E$ generates $\mathcal S_{AB_1}\otimes\mathcal G_{B_2C}$.  }
\end{figure}

We can use this diagrammatic representation of a general state from $\mathcal S_{AB_1}\otimes\mathcal G_{B_2C}$ for the analysis of entanglement.
Consider now bipartition $A|BC$. Tracing out subsystems $B=B_1B_2$ and $C$ of the state has the same effect as tracing out subsystem $E$ of the corresponding  state  $\ket{\phi}_{DE}$ from $\hil_D\otimes\hil_E$ with subsequent action of the quantum channel $\Phi\colon\hil_D\rightarrow\hil_A$ associated with the isometry  $V_1$~(see Fig.~\ref{fig:ges_bip1}). The isometry $V_2$ gets completely traced out and has no effect here. The channel $\Phi$ hence acts on a state $\rho^{\phi}_D=\mathrm{Tr}_E\{\dyad{\phi}_{DE}\}$. Being a convex function, the output norm $\norm{\Phi(\rho^{\phi}_D)}_{\infty}$ attains its maximal value, $\nu_{\infty}(\Phi)$, on pure $\rho^{\phi}_D$~(and, correspondingly, on separable $\ket{\phi}_{DE}$). Consequently, for  the geometric measure of entanglement of the subspace $\mathcal S_{AB_1}\otimes\mathcal G_{B_2C}$ across bipartition $A|BC$ we have
\begin{equation}
    G_{A|BC}(\mathcal S_{AB_1}\otimes\mathcal G_{B_2C}) = 1 - \nu_{\infty}(\Phi).
\end{equation}
On the other hand, the channel $\Phi$ corresponds to the isometry $V_1$ whose range is $\mathcal S_{AB_1}$, and so $\nu_{\infty}(\Phi)$ is equal to the maximum of the first Schmidt coefficient squared taken over all states in $\mathcal S_{AB_1}$. In other words, $G(\mathcal S_{AB_1})=1 - \nu_{\infty}(\Phi)$, and hence
\begin{equation}
   G_{A|BC}(\mathcal S_{AB_1}\otimes\mathcal G_{B_2C}) =  G(\mathcal S_{AB_1}).
\end{equation}

The analysis of bipartition $C|AB$, conducted similarly, yields
\begin{equation}
   G_{C|AB}(\mathcal S_{AB_1}\otimes\mathcal G_{B_2C}) =  G(\mathcal G_{B_2C}).
\end{equation}

\begin{figure}[b]
\includegraphics[scale=0.37]{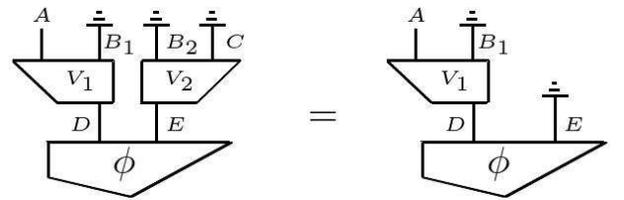}
\caption{\label{fig:ges_bip1}Tracing out subsystems $B$ and $C$ of a state from $\mathcal S_{AB_1}\otimes\mathcal G_{B_2C}$ is equivalent to action of a quantum channel associated to $V_1$ on a state $\rho^{\phi}_D=\mathrm{Tr}_E\{\dyad{\phi}_{DE}\}$.}
\end{figure}

Consider bipartition $B|AC$. Tracing out subsystem $B=B_1B_2$ is equivalent to action of two quantum channels: $\Phi_1\colon\hil_D\rightarrow\hil_A$ and $\Phi_2\colon\hil_E\rightarrow\hil_C$, associated with the isometries $V_1$ and $V_2$ and applied to subsystems $D$ and $E$ of $\ket{\phi}_{DE}$, respectively~(see Fig.~\ref{fig:ges_bip2}).
\begin{figure}[t]
\includegraphics[scale=0.37]{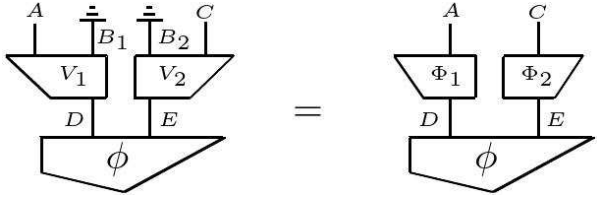}
\caption{\label{fig:ges_bip2}Tracing out subsystem $B$ of a state from $\mathcal S_{AB_1}\otimes\mathcal G_{B_2C}$ is equivalent to action of  quantum channels $\Phi_1$ and $\Phi_2$ on parties $D$ and $E$ of the corresponding state $\ket{\phi}_{DE}$ from $\hil_D\otimes\hil_E$.}
\end{figure}
Analytically this state  can be presented as
\begin{equation}\label{phistate}
    (\Phi_1\otimes\Phi_2)\dyad{\phi}_{DE} = (I\otimes\Phi_2)\,\tau_{DE},
\end{equation}
where $\tau_{DE} = (\Phi_1\otimes I)\dyad{\phi}_{DE}$. For the output norm of this state we have
\begin{equation}\label{geombound}
    \norm{(I\otimes\Phi_2)\,\tau_{DE}}_{\infty}\leqslant\nu(I\otimes\Phi_2)_{\infty} = \nu(\Phi_2)_{\infty},
\end{equation}
where the last equality is due to the property~(\ref{idealcomb}). From Eq.~(\ref{geombound}) it follows that $G_{B|AC}\geqslant G(\mathcal G_{B_2C})$. Actually, $\Phi_1$ and $\Phi_2$ enter Eq.~(\ref{phistate}) symmetrically, and hence another bound for the geometric measure can be written: $G_{B|AC}\geqslant G(\mathcal S_{AB_1})$. Combining the two results, we have:
\begin{equation}\label{bipbound}
G_{B|AC}(\mathcal S_{AB_1}\otimes\mathcal G_{B_2C})\geqslant\max\left(G(\mathcal S_{AB_1}),\,G(\mathcal G_{B_2C})\right).
\end{equation}
Gathering the results across three bipartitions, we obtain Eq.~(\ref{gmeges}).
\end{proof}

\begin{remark}\label{remchan}
The bound in Eq.~(\ref{bipbound}) is not optimal. The geometric measure across bipartition $B|AC$ is directly connected with the maximal output norm of a tensor product of two channels~(as in Eq.~(\ref{phistate})) and the problem of multiplicativity of the maximal output norm,  which was investigated in Refs.~\cite{AmHolWer2000,HolWern02,King03,HaydWin08}. In general, the norm is not multiplicative, and $\nu_p(\Phi_1\otimes\Phi_2)\geqslant\nu_p(\Phi_1)\nu_p(\Phi_2)$. In some particular cases, for example, when one of two channels is entanglement breaking, multiplicativity holds~\cite{King03}. In relation to Lemma~\ref{gesprod} this means that, when one of the completely entangled subspaces in tensor product corresponds to an entanglement breaking channel~(with output purity strictly less than $1$), the geometric measure across bipartition $B|AC$ attains its maximal possible value
\begin{multline}
    G_{B|AC}(\mathcal S_{AB_1}\otimes\mathcal G_{B_2C})=G(\mathcal S_{AB_1}) + G(\mathcal G_{B_2C})\\ - G(\mathcal S_{AB_1})G(\mathcal G_{B_2C}).
\end{multline}
\end{remark}

Lemma~\ref{gesprod} can be extended to the case where $(n+1)$-partite GESs are constructed from tensor product of $n$ bipartite CESs with subsequent joining the adjacent subsystems.
\begin{corollary}
Let $\mathcal{S}^{(1)}_{A_1A_2},\,\mathcal{S}^{(2)}_{A_3A_4},\,\ldots,\,\mathcal{S}^{(n)}_{A_{2n-1}A_{2n}}$  be a system of $n$ bipartite completely entangled subspaces of tensor product Hilbert spaces $\hil_{A_1}\otimes\hil_{A_2},\,\hil_{A_3}\otimes\hil_{A_4},\,\ldots,\,\hil_{A_{2n-1}}\otimes\hil_{A_{2n}}$, respectively~($n\geqslant 2$). Let
 \begin{equation*}
     \mathcal W_{A_1 A'_2A'_3\ldots A'_n A_{2n}}\coloneqq\mathcal{S}^{(1)}_{A_1A_2}\otimes\ldots\otimes\mathcal{S}^{(n)}_{A_{2n-1}A_{2n}}
 \end{equation*}
 be a subspace of an $(n+1)$-partite tensor product Hilbert space $\hil_{A_1}\otimes\hil_{A'_2}\otimes\hil_{A'_3}\otimes\ldots\otimes\hil_{A'_n}\otimes\hil_{A_{2n}}$,
 after taking tensor products and joining subsystems $A_2$ and $A_3$, $A_4$ and $A_5$, \ldots, $A_{2n-2}$ and $A_{2n-1}$ into $A'_2=A_2A_3$, $A'_3=A_4A_5$, \ldots, $A'_n=A_{2n-2}A_{2n-1}$, respectively. Then $\mathcal W_{A_1 A'_2A'_3\ldots A'_n A_{2n}}$ is  genuinely entangled,  with the geometric measure of genuine entanglement 
 \begin{equation}\label{ngeom}
     G_{GME}(\mathcal W_{A_1 A'_2A'_3\ldots A'_n A_{2n}})
     =\min\left(G_1,\,\ldots,\, G_n\right),
 \end{equation}
 where $G_i$ -- the geometric measure of entanglement of the subspace $\mathcal{S}^{(i)}_{A_{2i-1}A_{2i}}$, $1\leqslant i\leqslant n$. 
\end{corollary}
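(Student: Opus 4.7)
The plan is to mimic the proof of Lemma~\ref{gesprod} directly, exploiting the chain structure of the construction. For each $i=1,\ldots,n$ introduce an auxiliary Hilbert space $\hil_{D_i}$ of dimension $\dim \mathcal{S}^{(i)}_{A_{2i-1}A_{2i}}$ together with an isometry $V_i\colon\hil_{D_i}\to\hil_{A_{2i-1}}\otimes\hil_{A_{2i}}$ whose range is $\mathcal{S}^{(i)}_{A_{2i-1}A_{2i}}$, so that $\mathcal W_{A_1 A'_2\ldots A'_n A_{2n}}$ is the image of $V_1\otimes\ldots\otimes V_n$ acting on $\hil_{D_1}\otimes\ldots\otimes\hil_{D_n}$. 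Let $\Phi_i$ denote the quantum channel obtained from $V_i$ by tracing out one of its two output wires; since the range of $V_i$ is a CES, $\nu_\infty(\Phi_i)=1-G_i<1$.

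View the $(n+1)$ parties as the nodes of a chain with $A_1$ and $A_{2n}$ as endpoints and the $A'_j$'s as interior nodes, where $\mathcal{S}^{(i)}$ is the edge joining the $i$-th and $(i+1)$-th nodes. Fix any nontrivial bipartition $X|\bar X$ of $\mathcal W$. Because the chain is connected, at least one edge must cross the cut, i.e., some $\mathcal{S}^{(i^*)}$ has one wire in $X$ and the other in $\bar X$. For a general state $\ket{\Psi}=(V_1\otimes\ldots\otimes V_n)\ket{\zeta}\in\mathcal W$, tracing out $\bar X$ expresses the reduced state $\rho_X$ diagrammatically as a concatenation of the channels $\Phi_j$ (for each split $\mathcal{S}^{(j)}$), embedding isometries $V_j$ (for each $\mathcal{S}^{(j)}$ lying wholly in $X$), and partial traces over $\hil_{D_j}$ (for each $\mathcal{S}^{(j)}$ lying wholly in $\bar X$), applied to the appropriate reduced density operator of $\dyad{\zeta}$.

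Singling out the channel $\Phi_{i^*}$ and absorbing every other channel, isometry and partial trace into the input --- each such map is CPTP and hence preserves the property of being a density operator --- we obtain
\begin{equation*}
    \rho_X = (I\otimes\cdots\otimes\Phi_{i^*}\otimes\cdots\otimes I)(\tau)
\end{equation*}
for some density operator $\tau$. Iterating Eq.~(\ref{idealcomb}) then gives $\norm{\rho_X}_\infty\leqslant\nu_\infty(\Phi_{i^*})=1-G_{i^*}<1$, so $\rho_X$ is mixed for every such $\ket{\Psi}$ and every nontrivial bipartition. Thus $\mathcal W$ is genuinely entangled, and moreover $G_{X|\bar X}(\mathcal W)\geqslant G_{i^*}\geqslant\min_i G_i$, whence $G_{GME}(\mathcal W)\geqslant\min_i G_i$.

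For the matching upper bound, consider for each $j\in\{1,\ldots,n\}$ the single-edge bipartition separating $\{A_1,A'_2,\ldots,A'_j\}$ from $\{A'_{j+1},\ldots,A'_n,A_{2n}\}$, with the natural conventions at the endpoints. Only $\mathcal{S}^{(j)}$ is split across this cut, so the situation reduces verbatim to that of bipartition $A|BC$ (or $C|AB$) in the proof of Lemma~\ref{gesprod}: the unsplit CESs contribute only spectrum-preserving isometries and partial traces, while the sole nontrivial channel $\Phi_j$ attains $\nu_\infty(\Phi_j)$ on an appropriate pure input, giving $G_{X|\bar X}(\mathcal W)=G_j$ exactly. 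Minimizing over $j$ yields $G_{GME}(\mathcal W)\leqslant\min_j G_j$, which combined with the preceding paragraph establishes Eq.~(\ref{ngeom}). The main technical obstacle is the bookkeeping needed to express $\rho_X$ as a clean composition of maps; once this is done, the CPTP-absorption trick reduces every bipartition's output norm to that of a single split channel and the argument closes uniformly.
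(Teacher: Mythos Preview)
Your proof is correct and follows essentially the same strategy as the paper: represent $\mathcal W$ as the range of $V_1\otimes\cdots\otimes V_n$, analyze each bipartition via the associated channels, bound the output $\infty$-norm using Eq.~(\ref{idealcomb}), and achieve the minimum on bipartitions where a single edge is split. Your chain-graph language and the explicit ``absorb every other CPTP map into the input'' step are a tidy repackaging of the paper's argument; the paper additionally records the sharper per-cut bounds $\geqslant\max(G_{i_1},\ldots,G_{i_k})$ when several edges are split, but this extra information is not needed for Eq.~(\ref{ngeom}).
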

\begin{proof}
In analogy with the proof of Lemma~\ref{gesprod}~(see Fig.~\ref{fig:gesgenod}), the $(n+1)$-partite subspace under consideration is the result of action of  $n$ isometries $\{V_i\}$ on each state $\ket{\phi}$ from an $n$-partite tensor product Hilbert space  $\hil_{C_1}\otimes\ldots\otimes\hil_{C_n}$, with subsequent joining the adjacent subsystems  $A_2$ and $A_3$, \ldots, $A_{2n-2}$ and $A_{2n-1}$ into $A'_2$, \ldots, $A'_n$, respectively~(see Fig.~\ref{fig:nges}).  Here the isometry $V_i$ is associated with the subspace $\mathcal{S}^{(i)}_{A_{2i-1}A_{2i}}$ for $1\leqslant i\leqslant n$.

Now, analyzing entanglement in each of the $2^n-1$ possible bipartite cuts in a way similar to that in the proof of Lemma~\ref{gesprod}, we obtain $2^n-1$ values and lower bounds~(written with the $\geqslant$ signs) for the geometric measure in these cuts:
\begin{multline}\label{allbips}
    G_1,\, G_2,\,\ldots,\,G_n; \geqslant\max(G_1,\,G_2),\,\geqslant\max(G_1,\,G_3),\\\ldots;\,\geqslant\max(G_1,\,G_2,\,G_3),\,\ldots;\,\geqslant\max\left(G_1,\,\ldots,\, G_n\right).
\end{multline}
Here, for example, the value $G_i$ appears in a bipartite cut where, after tracing out appropriate subsystems, only the isometry $V_i$ is left partially traced out, with all other isometries $\{V_k\}$, $k\ne i$ being completely traced out. This situation is analogous to that in the proof of Lemma~\ref{gesprod} shown on Fig.~\ref{fig:ges_bip1}. For another example, the lower bound $\max(G_1,\,G_2,\,G_3)$ appears for a cut where, after tracing out appropriate subsystems, only partially traced out isometries $V_1$, $V_2$, $V_3$ are left, and the rest isometries are completely traced out. This case is analogous to that shown on Fig.~\ref{fig:ges_bip2}~(the difference is that here are three isometries instead of those two presented on the figure).

Noting that the value $\min\left(G_1,\,\ldots,\, G_n\right)$ is the minimum among those in Eq.~(\ref{allbips}), we obtain the equality in Eq.~(\ref{ngeom}).
\end{proof}

\begin{figure}[t]
\includegraphics[scale=0.35]{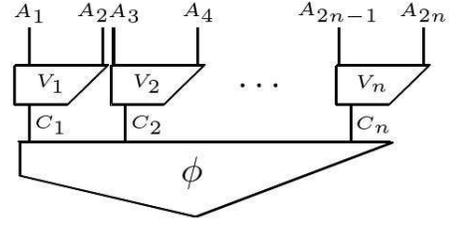}
\caption{\label{fig:nges} The isometry $V_1\otimes\ldots\otimes V_n$ acting on a general  state $\ket{\phi}$ from $\hil_{C_1}\otimes\ldots\otimes\hil_{C_n}$. Action of the isometry on each such state, after joining subsystems $A_2$ and $A_3$, \ldots, $A_{2n-2}$ and $A_{2n-1}$ into $A'_2$, \ldots, $A'_n$ respectively, generates the subspace $\mathcal W_{A_1 A'_2A'_3\ldots A'_n A_{2n}}$, which is a GES.}
\end{figure}

Next we consider some situations where GESs are constructed from direct sums of tensor products of CESs. The following property will be useful here.
\begin{lemma}\label{lemCES}
Let $\mathcal S_{AB_1}$ be a completely entangled  subspace of a tensor product Hilbert space $\hil_A\otimes\hil_{B_1}$. Then the tensor product $\mathcal S_{AB_1}\otimes\hil_{B_2}$, after joining subsystems $B_1$ and $B_2$ into $B=B_1B_2$, is a completely entangled subspace of $\hil_A\otimes\hil_B$.
\end{lemma}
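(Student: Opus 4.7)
The plan is to reduce this to a statement about the maximal output norm of an associated channel, in the same spirit as the previous lemma, since the characterization ``$\mathcal S$ is a CES iff $\nu_p(\Phi)<1$'' has already been invoked in the paper.

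First I would introduce an isometry $V_1\colon\hil_D\to\hil_A\otimes\hil_{B_1}$ whose range is $\mathcal S_{AB_1}$ (exactly as in the proof of Lemma~\ref{gesprod}), and observe that $\mathcal S_{AB_1}\otimes\hil_{B_2}$ is then the range of the isometry $V_1\otimes I_{B_2}\colon\hil_D\otimes\hil_{B_2}\to\hil_A\otimes\hil_{B_1}\otimes\hil_{B_2}$, where the target space, after joining $B_1B_2=B$, is $\hil_A\otimes\hil_B$. To test whether the resulting subspace is completely entangled across the cut $A|B$, I would write down the channel $\Phi'\colon\mathcal L(\hil_D\otimes\hil_{B_2})\to\mathcal L(\hil_A)$ obtained from this isometry by tracing out $B_1B_2$. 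A short diagrammatic check (pulling the discard on $B_2$ through the identity wire) shows that this channel factors as $\Phi'(\sigma)=\Phi_1\bigl(\mathrm{Tr}_{B_2}\sigma\bigr)$, where $\Phi_1$ is the channel from $\hil_D$ to $\hil_A$ associated with $V_1$.

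From this factorization the result is immediate: for every input state $\sigma$,
\begin{equation}
\|\Phi'(\sigma)\|_p=\|\Phi_1(\mathrm{Tr}_{B_2}\sigma)\|_p\leqslant\nu_p(\Phi_1),
\end{equation}
so $\nu_p(\Phi')\leqslant\nu_p(\Phi_1)$ (and in fact equality holds, by the same reasoning behind Eq.~(\ref{idealcomb})). Since $\mathcal S_{AB_1}$ is completely entangled, $\nu_p(\Phi_1)<1$ for some $p>1$, hence $\nu_p(\Phi')<1$, and therefore the range of $V_1\otimes I_{B_2}$, namely $\mathcal S_{AB_1}\otimes\hil_{B_2}$ viewed as a subspace of $\hil_A\otimes\hil_B$, is completely entangled.

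If a more elementary argument is preferred, the same conclusion can be reached directly: expand an arbitrary $\ket{\psi}\in\mathcal S_{AB_1}\otimes\hil_{B_2}$ in a fixed basis $\{\ket{k}_{B_2}\}$ as $\ket{\psi}=\sum_k\ket{\phi_k}_{AB_1}\otimes\ket{k}_{B_2}$ with $\ket{\phi_k}\in\mathcal S_{AB_1}$, assume for contradiction that $\ket{\psi}=\ket{\alpha}_A\otimes\ket{\beta}_B$ with $\ket{\beta}=\sum_k\ket{\gamma_k}_{B_1}\otimes\ket{k}_{B_2}$, and match coefficients of $\ket{k}_{B_2}$ to get $\ket{\phi_k}=\ket{\alpha}\otimes\ket{\gamma_k}$ for every $k$; since $\mathcal S_{AB_1}$ contains no nonzero product vectors, every $\ket{\gamma_k}$ vanishes and so $\ket{\psi}=0$. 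The only real step that requires care in either route is the bookkeeping around the joining $B=B_1B_2$ and verifying that tracing out $B$ really acts as ``trace out $B_2$, then apply $\Phi_1$'' on the distinguished side; I expect this to be the main (and only mild) obstacle, and it is best handled by the tensor-diagram language already adopted in the paper.
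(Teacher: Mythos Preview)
Your proposal is correct. Both routes you sketch are valid, but neither is the one the paper uses, so a brief comparison is in order.

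The paper's own proof is entirely elementary and avoids the channel/$\nu_p$ machinery: it writes a generic vector of $\mathcal S_{AB_1}\otimes\hil_{B_2}$ as $\sum_i c_i\,\ket{\phi_i}_{AB_1}\otimes\ket{\nu_i}_{B_2}$ with the $\ket{\nu_i}$ orthonormal and each $\ket{\phi_i}\in\mathcal S_{AB_1}$, then traces out $B=B_1B_2$ directly, using $\mathrm{Tr}_{B_2}\{\ket{\nu_i}\bra{\nu_j}\}=\delta_{ij}$ to obtain the reduced state $\sum_i|c_i|^2\,\rho^{\phi_i}_A$, a convex mixture of mixed states and hence mixed. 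Your first argument is really the same computation packaged abstractly: the factorization $\Phi'=\Phi_1\circ\mathrm{Tr}_{B_2}$ is precisely what the orthonormality step is doing, and ``$\nu_p(\Phi_1)<1$'' encodes ``each $\rho^{\phi_i}_A$ is mixed.'' The channel formulation buys uniformity with the proof of Lemma~\ref{gesprod} and gives the quantitative statement $\nu_p(\Phi')=\nu_p(\Phi_1)$ for free; the paper's version buys self-containedness and sets up the explicit convex-sum structure that is reused verbatim in Corollary~\ref{sumprodces}. Your second (contradiction) argument is a genuinely different and even shorter route; it is fine as stated, though it yields only the qualitative conclusion and not the reduced-state structure that the subsequent corollary relies on.
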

\begin{proof}
Assume that $\mathcal S_{AB_1}$ is spanned by vectors $\ket{\psi_1}_{AB_1},\,\ldots,\,\ket{\psi_n}_{AB_1}$. Let $\ket{\nu_1}_{B_2},\,\ldots,\,\ket{\nu_k}_{B_2}$ be an orthonormal basis in $\hil_{B_2}$.
The elements $\{\ket{\psi_i}_{AB_1}\otimes\ket{\nu_j}_{B_2}\}$ are linearly independent due to  orthonormality of the system $\{\ket{\nu_j}_{B_2}\}$ and linear independence of $\{\ket{\psi_i}_{AB_1}\}$. Let us check that any linear combination of these elements yields an entangled state in $\hil_A\otimes\hil_{B_1B_2}$. If, in some linear combinations, there are elements with the same vector from $\hil_{B_2}$, they can be combined into one term, as in the following example:
\begin{multline}
    c_i\ket{\psi_i}_{AB_1}\otimes\ket{\nu_l}_{B_2} + c_j\ket{\psi_j}_{AB_1}\otimes\ket{\nu_l}_{B_2}  \\
    = c'\ket{\phi}_{AB_1}\otimes\ket{\nu_l}_{B_2},
\end{multline}
where 
\begin{equation*}
    c'\ket{\phi}_{AB_1} = c_i\ket{\psi_i}_{AB_1} + c_j\ket{\psi_j}_{AB_1},
\end{equation*}
with $\ket{\phi}_{AB_1}$ being a normalized state and $c'$ -- some normalization factor. As a linear combination of vectors from a CES, the vector $\ket{\phi}_{AB_1}$ is entangled. Therefore, without loss of generality, one can consider linear combinations
\begin{equation}\label{lincom}
    \sum_{i=1}^k\,c_i\ket{\phi_i}_{AB_1}\otimes\ket{\nu_i}_{B_2},\quad\sum_{i=1}^k\,\abs{c_i}^2 = 1,
\end{equation}
where all the terms have distinct vectors $\{\ket{\nu_i}\}$ from $\hil_{B_2}$, and $\{\ket{\phi_i}_{AB_1}\}$ -- some normalized vectors from the given bipartite CES $\mathcal S_{AB_1}$. Next, tracing out subsystem $B=B_1B_2$ in Eq.~(\ref{lincom}), with the use of the orthonormality property $\mathrm{Tr}_{B_2}\{\ket{\nu_i}\bra{\nu_j}_{B_2}\}=\delta_{ij}$, one obtains the reduced density operator on $\hil_A$:
\begin{multline}\label{mixedreas}
 \sum_{i,\,j=1}^k\,c_i c_j^{*}\,\mathrm{Tr}_{B_1}\{\ket{\phi_i}\bra{\phi_j}_{AB_1}\}\, \mathrm{Tr}_{B_2}\{\ket{\nu_i}\bra{\nu_j}_{B_2}\}\\
 = \sum_{i=1}^k\,\abs{c_i}^2\,\mathrm{Tr}_{B_1}\{\ket{\phi_i}\bra{\phi_i}_{AB_1}\}\\
 \equiv\,\sum_{i=1}^k\,\abs{c_i}^2\,\rho^{\phi_i}_A.
\end{multline}
As a convex sum of mixed states $\rho^{\phi_i}_A$, this state is mixed, and hence the linear combination in Eq.~(\ref{lincom}) yields an entangled state in $\hil_A\otimes\hil_B$.
\end{proof}

The statement of Lemma~\ref{lemCES} can now be slightly changed with the aim to consider direct sums of tensor products.

\begin{corollary}\label{sumprodces}
Let $\mathcal S^{(1)}_{AB_1},\,\ldots,\,\mathcal S^{(n)}_{AB_1}$ be a system of completely entangled subspaces of  $\hil_A\otimes\hil_{B_1}$. Let $\mathcal P^{(1)}_{B_2},\,\ldots,\,\mathcal P^{(n)}_{B_2}$ be a system of mutually orthogonal subspaces of $\hil_{B_2}$. Then the direct sum of tensor products 
\begin{equation}\label{dsum}
    \left(\mathcal S^{(1)}_{AB_1}\otimes\mathcal P^{(1)}_{B_2}\right)\oplus\ldots\oplus\left(\mathcal S^{(n)}_{AB_1}\otimes\mathcal P^{(n)}_{B_2}\right),
\end{equation}
after joining subsystems $B_1$ and $B_2$ into $B=B_1B_2$, is a completely entangled subspace of $\hil_A\otimes\hil_B$.
\end{corollary}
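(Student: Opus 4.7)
The plan is to reduce this to the mechanism already established in Lemma~\ref{lemCES}, with a careful bookkeeping that exploits the mutual orthogonality of the subspaces $\{\mathcal P^{(i)}_{B_2}\}$. First, for each $i$ I would choose an orthonormal basis $\{\ket{\nu^{(i)}_j}_{B_2}\}_{j}$ of $\mathcal P^{(i)}_{B_2}$; because the $\mathcal P^{(i)}_{B_2}$ are mutually orthogonal, the concatenated family $\{\ket{\nu^{(i)}_j}_{B_2}\}_{i,j}$ is an orthonormal set in $\hil_{B_2}$. A general element of the subspace in Eq.~(\ref{dsum}) can then be written as
\begin{equation*}
\ket{\Psi}_{AB_1B_2} \;=\; \sum_{i=1}^n \sum_{j} c^{(i)}_j\,\ket{\phi^{(i)}_j}_{AB_1}\otimes\ket{\nu^{(i)}_j}_{B_2},
\end{equation*}
with each $\ket{\phi^{(i)}_j}_{AB_1}\in\mathcal S^{(i)}_{AB_1}$.

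Next I would reduce to ``distinct $B_2$-vectors per term'' exactly as in the proof of Lemma~\ref{lemCES}: terms sharing the same $\ket{\nu^{(i)}_j}_{B_2}$ can only occur inside the same block $i$, so collapsing them yields a single coefficient times an entangled vector in $\mathcal S^{(i)}_{AB_1}$ (entanglement is preserved because $\mathcal S^{(i)}_{AB_1}$ is a CES, hence closed under linear combinations up to entanglement). After this step $\ket{\Psi}$ has the form $\sum_\ell c_\ell \ket{\phi_\ell}_{AB_1}\otimes\ket{\nu_\ell}_{B_2}$ with $\sum_\ell |c_\ell|^2=1$, the $\ket{\nu_\ell}_{B_2}$ pairwise orthonormal across all blocks, and each $\ket{\phi_\ell}_{AB_1}$ an entangled vector belonging to some $\mathcal S^{(i(\ell))}_{AB_1}$.

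Now I trace out $B=B_1B_2$. Orthonormality of the $\ket{\nu_\ell}_{B_2}$ kills all cross terms, and I obtain
\begin{equation*}
\mathrm{Tr}_{B}\dyad{\Psi}_{AB_1B_2} \;=\; \sum_\ell |c_\ell|^2\,\rho^{\phi_\ell}_A,
\end{equation*}
a convex combination of the reduced states $\rho^{\phi_\ell}_A=\mathrm{Tr}_{B_1}\dyad{\phi_\ell}_{AB_1}$. Each $\rho^{\phi_\ell}_A$ is mixed because $\ket{\phi_\ell}_{AB_1}$ is entangled, and a convex combination of density operators can be pure only if every summand equals that pure state; hence the reduction is mixed and $\ket{\Psi}$ is entangled across $A|B$.

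The main point requiring care is the second step: I must be sure that when I merge terms with coincident $B_2$-labels, the merged $A B_1$-vector really stays in a single CES. This is guaranteed only because orthogonality of the $\mathcal P^{(i)}_{B_2}$ forces coincidences to happen within one block, preventing any illicit mixing of $\mathcal S^{(i)}_{AB_1}$ with $\mathcal S^{(j)}_{AB_1}$ for $i\ne j$ (whose sum need not be completely entangled). Once this is in place, the remaining argument is a direct transcription of the tracing-out calculation~(\ref{mixedreas}) from Lemma~\ref{lemCES}.
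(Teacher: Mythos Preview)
Your proposal is correct and follows essentially the same route as the paper: choose orthonormal bases of the $\mathcal P^{(i)}_{B_2}$, use their mutual orthogonality to obtain a global orthonormal family $\{\ket{\nu^{(i)}_j}\}$, write a generic vector of the direct sum as in Eq.~(\ref{lindifces}) with each $AB_1$-factor lying in a single CES, and then invoke the tracing-out computation~(\ref{mixedreas}) from Lemma~\ref{lemCES}. Your explicit remark that coincident $B_2$-labels can only occur within one block (so the merged $AB_1$-vector stays in a single $\mathcal S^{(i)}_{AB_1}$) makes precise what the paper leaves implicit, but the argument is otherwise identical.
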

\begin{proof}
Let $\mathcal S^{(r)}_{AB_1}$ be spanned by a system of vectors $\ket{\psi_1^{(r)}}_{AB_1},\,\ldots,\,\ket{\psi_{l_r}^{(r)}}_{AB_1}$ and let $\mathcal P^{(r)}_{AB_1}$ be spanned by an  orthonormal system of vectors  $\ket{\nu_1^{(r)}}_{B_2},\,\ldots,\,\ket{\nu_{k_r}^{(r)}}_{B_2}$, for each $r\colon 1\leqslant r\leqslant n$. An arbitrary vector $\ket{\chi}_{AB}$ that belongs to the direct sum~(\ref{dsum}) can be decomposed as
\begin{equation}\label{lindifces}
 \ket{\chi}_{AB} = \sum_{r=1}^n\sum_{i=1}^{k_r}\, c_i^{(r)}\ket{\phi_i^{(r)}}_{AB_1}\otimes\ket{\nu_i^{(r)}}_{B_2},
 \end{equation}
 where the terms with distinct vectors $\nu$ were gathered and  each $\ket{\phi_i^{(r)}}_{AB_1}$, being a linear combination of $\ket{\psi_1^{(r)}}_{AB_1},\,\ldots,\,\ket{\psi_{l_r}^{(r)}}_{AB_1}$,  is entangled. All vectors $\nu$ are mutually orthogonal: $\bra{\nu_i^{(r)}}\ket{\nu_j^{(s)}}=\delta_{ij}\delta_{rs}$, and hence the linear combination  in Eq.~(\ref{lindifces}) has the same structure as that in Eq.~(\ref{lincom}). Repeating the same reasoning as in Eq.~(\ref{mixedreas}), we obtain that $\ket{\chi}_{AB}$ is entangled.
\end{proof}

\begin{lemma}\label{sumprod}
Let $\mathcal S^{(1)}_{AB_1},\,\ldots,\,\mathcal S^{(n)}_{AB_1}$ be a system of completely entangled subspaces of  $\hil_A\otimes\hil_{B_1}$, and $\mathcal G^{(1)}_{B_2C},\,\ldots,\,\mathcal G^{(n)}_{B_2C}$ -- a system of mutually orthogonal completely entangled subspaces of $\hil_{B_2}\otimes\hil_C$ whose direct sum $\Sigma_{B_2C}\coloneqq\mathcal G^{(1)}_{B_2C}\oplus\,\ldots\oplus\,\mathcal G^{(n)}_{B_2C}$ is also completely entangled. Then the direct sum of tensor products \begin{equation}\label{dsumges}
    \left(\mathcal S^{(1)}_{AB_1}\otimes\mathcal G^{(1)}_{B_2C}\right)\oplus\ldots\oplus\left(\mathcal S^{(n)}_{AB_1}\otimes\mathcal G^{(n)}_{B_2C}\right),
\end{equation}
after joining subsystems $B_1$ and $B_2$ into $B=B_1B_2$, is a genuinely entangled subspace of $\hil_A\otimes\hil_B\otimes\hil_C$.
\end{lemma}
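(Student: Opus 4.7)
The plan is to verify genuine entanglement of the direct sum~(\ref{dsumges}) by checking bipartite entanglement across each of the three cuts $A|BC$, $C|AB$ and $B|AC$ in turn. The key observation I would lean on is that~(\ref{dsumges}) is contained in the ambient subspace $\hil_{AB_1}\otimes\Sigma_{B_2C}$, so that for two of the three cuts one can exploit the CES property of $\Sigma_{B_2C}$, while the third cut is handled by a direct appeal to Corollary~\ref{sumprodces}.

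For the cut $A|BC$, I would apply Corollary~\ref{sumprodces} with $\hil_{B_2}$ reinterpreted as the composite Hilbert space $\hil_{B_2}\otimes\hil_C$ and the orthogonal subspaces $\mathcal P^{(r)}_{B_2}$ identified with the $\mathcal G^{(r)}_{B_2C}$ (only their mutual orthogonality is needed, not complete entanglement); this delivers complete entanglement of~(\ref{dsumges}) across $A|(B_1B_2C)=A|BC$. For the two remaining cuts I would introduce an isometry $W\colon\hil_F\to\hil_{B_2}\otimes\hil_C$ whose range is $\Sigma_{B_2C}$, together with the associated channel $\Psi_{F\to C}(\rho)=\mathrm{Tr}_{B_2}(W\rho W^{\dagger})$; since $\Sigma_{B_2C}$ is completely entangled, one has $\nu_\infty(\Psi)<1$. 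Every vector of~(\ref{dsumges}) can then be written as $\ket{\chi}=(I_{AB_1}\otimes W)\ket{\phi}$ for some $\ket{\phi}\in\hil_A\otimes\hil_{B_1}\otimes\hil_F$. Tracing out $B=B_1B_2$ yields $\rho_{AC}=(I_A\otimes\Psi)(\rho_{AF})$ with $\rho_{AF}=\mathrm{Tr}_{B_1}\{\dyad{\phi}\}$, whose infinity norm is bounded by $\nu_\infty(I\otimes\Psi)=\nu_\infty(\Psi)<1$ via the property~(\ref{idealcomb}), so $\rho_{AC}$ is mixed and $\ket{\chi}$ is entangled across $B|AC$. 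Tracing out $AB_1B_2$ instead produces $\rho_C=\Psi(\mathrm{Tr}_{AB_1}\{\dyad{\phi}\})$, once more of infinity norm strictly below one, giving entanglement across $C|AB$. Gathering the three cuts completes the argument.

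The cut $B|AC$ is the principal obstacle: a direct imitation of the proof of Lemma~\ref{gesprod} (as in Fig.~\ref{fig:ges_bip2}) would involve a tensor product of two nontrivially acting channels, one associated with each $\mathcal S^{(r)}_{AB_1}$ and each $\mathcal G^{(r)}_{B_2C}$, and run into the non-multiplicativity issue for maximal output norms flagged in Remark~\ref{remchan}. The device that sidesteps it is exactly the assumption that $\Sigma_{B_2C}=\mathcal G^{(1)}_{B_2C}\oplus\ldots\oplus\mathcal G^{(n)}_{B_2C}$ is itself completely entangled: this lets us bundle all of the $\mathcal G^{(r)}_{B_2C}$ into a single CES with a single channel $\Psi$, so that only the configuration $I\otimes\Psi$ has to be controlled, which is handled by Eq.~(\ref{idealcomb}).
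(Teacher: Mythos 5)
Your proof is correct, and it takes a mildly different route from the paper's. The paper argues by reduction to its earlier building blocks: it introduces an auxiliary input space $\hil_{B_2}$ of dimension $\dim\Sigma_{B_2C}$, applies Corollary~\ref{sumprodces} with abstract mutually orthogonal subspaces $\mathcal P^{(r)}_{B_2}$ to obtain a bipartite CES $\Omega_{AB}$, and then notes that the subspace~(\ref{dsumges}) is the image of $\Omega_{AB}$ under $I_{B_1}\otimes V$, where $V$ is an isometry onto $\Sigma_{B_2C}$; since this isometry has a completely entangled range (output purity strictly below one, preserved under tensoring with the identity by Eq.~(\ref{idealcomb})), the scheme of Fig.~\ref{fig:iso23} from Ref.~\cite{KVAnt21} immediately gives a GES. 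You instead verify the three bipartitions directly: Corollary~\ref{sumprodces}, applied with $\hil_{B_2}\otimes\hil_C$ in the role of the second factor and the $\mathcal G^{(r)}_{B_2C}$ themselves as the orthogonal subspaces (you correctly observe that only their orthogonality is used there), settles the cut $A|BC$; the cuts $B|AC$ and $C|AB$ are handled by the channel $\Psi$ attached to an isometry onto $\Sigma_{B_2C}$, using $\nu_\infty(I\otimes\Psi)=\nu_\infty(\Psi)<1$, which follows from Eq.~(\ref{idealcomb}) and the equivalence between complete entanglement of $\Sigma_{B_2C}$ and $\nu_\infty<1$. The ingredients coincide with the paper's (Corollary~\ref{sumprodces}, the isometry onto $\Sigma_{B_2C}$, Eq.~(\ref{idealcomb}), the norm--CES correspondence), but your version is self-contained, in effect re-proving the portion of the Fig.~\ref{fig:iso23} result that is actually needed, and it makes explicit which hypothesis carries which cut: the CES property of the $\mathcal S^{(r)}_{AB_1}$ is used only for $A|BC$, while complete entanglement of the direct sum $\Sigma_{B_2C}$ alone controls $B|AC$ and $C|AB$ --- precisely the asymmetry between the left and right factors that the paper comments on. The paper's packaging buys brevity and reuse of the cited result; yours buys transparency and dispenses with the auxiliary construction $\Omega_{AB}$.
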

\begin{proof}
Let $\hil_{B_2}$ be a Hilbert space of dimension equal to the dimension of $\Sigma_{B_2C}$. Consider an isometry $V\colon\,\hil_{B_2}\rightarrow\hil_{B_2}\otimes\hil_C$ that maps $\hil_{B_2}$ to $\Sigma_{B_2C}$.  The isometry has a CES as its range, and so it corresponds to a quantum channel with output purity strictly less than $1$. By Eq.~(\ref{idealcomb}), so does the isometry $I_{\scriptscriptstyle B_1}\otimes V_{\scriptscriptstyle B_2\rightarrow B_2C}$. 

Let $\mathcal P^{(1)}_{B_2},\,\ldots,\,\mathcal P^{(n)}_{B_2}$ be a system of mutually orthogonal subspaces of $\hil_{B_2}$. By Corollary~\ref{sumprodces}, 
\begin{equation*}
    \Omega_{AB}\coloneqq\left(\mathcal S^{(1)}_{AB_1}\otimes\mathcal P^{(1)}_{B_2}\right)\oplus\ldots\oplus\left(\mathcal S^{(n)}_{AB_1}\otimes\mathcal P^{(n)}_{B_2}\right)
\end{equation*}
is a CES of $\hil_A\otimes\hil_B$. The subspace in Eq.~(\ref{dsumges}) is obtained from $\Omega_{AB}$ by the action of the isometry $I_{\scriptscriptstyle B_1}\otimes V_{\scriptscriptstyle B_2\rightarrow B_2C}$ on subsystem $B=B_1B_2$~(see Fig.~\ref{fig:gesgen}). This situation corresponds to the scheme on Fig.~\ref{fig:iso23}. Therefore, the generated subspace  is genuinely entangled.
\end{proof}

\begin{figure}[t]
\includegraphics[scale=0.35]{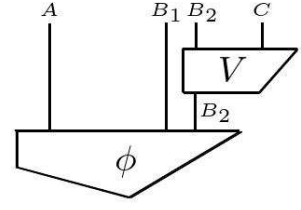}
\caption{\label{fig:gesgen} Action of the isometry $I_{\scriptscriptstyle B_1}\otimes V_{\scriptscriptstyle B_2\rightarrow B_2C}$ on each state from $\Omega_{AB}$  generates the GES presented in Eq.~(\ref{dsumges}).}
\end{figure}

Note that the CESs $S^{(1)}_{AB_1},\,\ldots,\,\mathcal S^{(n)}_{AB_1}$ in the above statement can be arbitrary, and they can have arbitrary relations to each other~(e.~g, intersect or not intersect). In particular, each of them can be spanned by just one entangled vector.

\begin{corollary}\label{corent}
Let $\ket{\psi_1}_{AB_1},\,\ldots,\,\ket{\psi_n}_{AB_1}$ be some entangled vectors in $\hil_A\otimes\hil_{B_1}$, and $\ket{\chi_1}_{B_2C},\,\ldots,\,\ket{\chi_n}_{B_2C}$ -- mutually orthogonal  vectors spanning a completely entangled subspace of $\hil_{B_2}\otimes\hil_C$. Then a system of vectors $\ket{\psi_1}_{AB_1}\otimes\ket{\chi_1}_{B_2C},\,\ldots,\,\ket{\psi_n}_{AB_1}\otimes\ket{\chi_n}_{B_2C}$ spans a genuinely entangled subspace of $\hil_A\otimes\hil_B\otimes\hil_C$.
\end{corollary}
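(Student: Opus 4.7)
The natural plan is to recognize this statement as a one-dimensional specialization of Lemma~\ref{sumprod}. Concretely, I will set $\mathcal S^{(i)}_{AB_1}\coloneqq\mathrm{span}\{\ket{\psi_i}_{AB_1}\}$ and $\mathcal G^{(i)}_{B_2C}\coloneqq\mathrm{span}\{\ket{\chi_i}_{B_2C}\}$ for $i=1,\ldots,n$, and then verify that every hypothesis of Lemma~\ref{sumprod} is met for these choices.

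Each $\mathcal S^{(i)}_{AB_1}$ is trivially a completely entangled subspace, since $\ket{\psi_i}_{AB_1}$ is entangled by assumption and any nonzero scalar multiple of an entangled vector remains entangled. Each $\mathcal G^{(i)}_{B_2C}$ is likewise a one-dimensional CES: because $\ket{\chi_i}_{B_2C}$ sits inside the CES spanned by $\ket{\chi_1}_{B_2C},\ldots,\ket{\chi_n}_{B_2C}$, it is itself entangled. The pairwise orthogonality of the $\ket{\chi_i}_{B_2C}$ immediately gives mutual orthogonality of the one-dimensional subspaces $\mathcal G^{(i)}_{B_2C}$, and their direct sum $\Sigma_{B_2C}$ is exactly the CES of $\hil_{B_2}\otimes\hil_C$ provided in the hypothesis. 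All preconditions of Lemma~\ref{sumprod} are therefore satisfied.

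Applying the lemma, the subspace in Eq.~(\ref{dsumges}) is genuinely entangled after joining $B_1$ and $B_2$ into $B$. Under our specialization, this subspace collapses to $\mathrm{span}\{\ket{\psi_1}_{AB_1}\otimes\ket{\chi_1}_{B_2C},\,\ldots,\,\ket{\psi_n}_{AB_1}\otimes\ket{\chi_n}_{B_2C}\}$, where linear independence of the spanning vectors is guaranteed by the orthogonality of $\{\ket{\chi_i}_{B_2C}\}$. This is precisely the claimed conclusion. No genuine obstacle is anticipated: the only point requiring care is the observation that each one-dimensional $\mathcal G^{(i)}_{B_2C}$ inherits complete entanglement from the ambient CES, which is immediate.
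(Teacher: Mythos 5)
Your proposal is correct and follows the same route the paper intends: the paper states this corollary as the special case of Lemma~\ref{sumprod} in which each $\mathcal S^{(i)}_{AB_1}$ is spanned by a single entangled vector and each $\mathcal G^{(i)}_{B_2C}$ is the one-dimensional span of $\ket{\chi_i}_{B_2C}$, exactly as you do. Your verification of the hypotheses (one-dimensional CESs, mutual orthogonality of the $\mathcal G^{(i)}_{B_2C}$, and their direct sum being the given CES) matches the paper's implicit argument.
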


\section{Applications}\label{sec::app}

The established properties can have several  applications. 

\subsection{Tensor products of mixed bipartite entangled states}

In Refs.~\cite{ShenChen20,SunChen21} it was stated as a conjecture that a tensor product of two mixed  bipartite entangled states, $\alpha_{AB_1}\otimes\beta_{B_2C}$, after joining $B_1$ and $B_2$, is a genuinely entangled tripartite state. Later the conjecture was disproved in Ref.~\cite{TenNet2021} by finding an example with two entangled isotropic states whose tensor product is not GE.
In this connection, it is interesting to search for sufficient conditions of genuine entanglement of such tensor products.

One condition of this type can be obtained from combining the properties of  tensor products of CESs with a particular  witness of genuine entanglement connected with projection on some GES, namely, in Ref.~\cite{KVAnt21} it was shown that if, for a multipartite state $\rho$ and a genuinely entangled subspace $W$, the inequality
\begin{equation}\label{EntWit}
    \Tr{\rho\,\Pi_W} + G_{GME}(W) - 1 >0
\end{equation}
holds, then $\rho$ is genuinely entangled. Here $\Pi_W$ -- an orthogonal projector onto $W$.
\begin{lemma}\label{StProd}
Let $\alpha_{AB_1}$ and $\beta_{B_2C}$ be  two bipartite mixed states on $\hil_{A}\otimes\hil_{B_1}$ and $\hil_{B_2}\otimes\hil_{C}$, respectively. Let $W_1$ and $W_2$ be two  completely entangled subspaces of  $\hil_{A}\otimes\hil_{B_1}$ and $\hil_{B_2}\otimes\hil_{C}$, respectively. Then the tensor product $\alpha_{AB_1}\otimes\beta_{B_2C}$, after joining $B_1$ and $B_2$ into $B=B_1B_2$, is a genuinely entangled tripartite state on $\hil_A\otimes\hil_B\otimes\hil_C$ if
\begin{multline}\label{EnCon}
    \Tr{\alpha_{AB_1}\,\Pi_{W_1}}\Tr{\beta_{B_2C}\,\Pi_{W_2}}  \\ > 1 - \min\left(G(W_1),\,G(W_2)\right).
\end{multline}
\end{lemma}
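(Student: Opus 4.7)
The plan is to apply the genuine-entanglement witness from Eq.~(\ref{EntWit}) with the GES being precisely the tensor-product subspace constructed in Lemma~\ref{gesprod}. Concretely, I set $W \coloneqq W_1 \otimes W_2$ with the joining of $B_1$ and $B_2$ into $B$. By Lemma~\ref{gesprod}, $W$ is a genuinely entangled subspace of $\hil_A\otimes\hil_B\otimes\hil_C$ with
\begin{equation*}
    G_{GME}(W) = \min\bigl(G(W_1),\,G(W_2)\bigr).
\end{equation*}

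Next, I observe that joining $B_1$ and $B_2$ is a unitary relabelling of computational bases, so the orthogonal projector onto $W$ factorises as $\Pi_W = \Pi_{W_1}\otimes \Pi_{W_2}$ (acting on $\hil_A\otimes\hil_{B_1}$ and $\hil_{B_2}\otimes\hil_C$ respectively, before joining). Consequently,
\begin{equation*}
    \Tr\!\bigl\{(\alpha_{AB_1}\otimes\beta_{B_2C})\,\Pi_W\bigr\}
    = \Tr\{\alpha_{AB_1}\,\Pi_{W_1}\}\,\Tr\{\beta_{B_2C}\,\Pi_{W_2}\},
\end{equation*}
by the standard factorisation of traces over tensor products.

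Plugging this into the witness inequality~(\ref{EntWit}) applied to $\rho = \alpha_{AB_1}\otimes\beta_{B_2C}$ and $W$, the condition for genuine entanglement becomes
\begin{equation*}
    \Tr\{\alpha_{AB_1}\,\Pi_{W_1}\}\,\Tr\{\beta_{B_2C}\,\Pi_{W_2}\} + \min\bigl(G(W_1),G(W_2)\bigr) - 1 > 0,
\end{equation*}
which is exactly the hypothesis~(\ref{EnCon}). Hence $\alpha_{AB_1}\otimes\beta_{B_2C}$ is genuinely entangled, and the proof is complete.

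There is no real obstacle here — the argument is essentially a direct composition of Lemma~\ref{gesprod} (which supplies the GES and its $G_{GME}$ value) with the previously established projection-based witness. The only point worth flagging is the factorisation of $\Pi_W$, which relies on the fact that ``joining'' is merely a basis relabelling and does not alter traces or projector structure; once this is noted, the rest is substitution.
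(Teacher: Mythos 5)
Your proof is correct and follows essentially the same route as the paper: apply the witness inequality~(\ref{EntWit}) to $\alpha_{AB_1}\otimes\beta_{B_2C}$ with the GES $W_1\otimes W_2$ supplied by Lemma~\ref{gesprod}, use $G_{GME}(W_1\otimes W_2)=\min\bigl(G(W_1),\,G(W_2)\bigr)$, and factorise the trace via $\Pi_{W_1\otimes W_2}=\Pi_{W_1}\otimes\Pi_{W_2}$. Your extra remark that the joining of $B_1$ and $B_2$ is only a basis relabelling and thus preserves the projector structure is a sensible clarification of a point the paper leaves implicit.
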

\begin{proof}
We can use condition~(\ref{EntWit}) with respect to the state $\alpha_{AB_1}\otimes\beta_{B_2C}$ and the subspace $W_1\otimes W_2$ of the tensor product Hilbert space $\hil_A\otimes\hil_B\otimes\hil_C$. By Lemma~\ref{gesprod}, $W_1\otimes W_2$ is a GES, with the GME geometric measure 
\begin{equation}\label{minG}
    G_{GME}(W_1\otimes W_2) = \min\left(G(W_1),\,G(W_2)\right).
\end{equation}
In addition,
\begin{multline}\label{TrF}
\Tr{\alpha_{AB_1}\otimes\beta_{B_2C}\,\Pi_{W_1\otimes W_2}}  \\ =
\Tr{\alpha_{AB_1}\,\Pi_{W_1}}\Tr{\beta_{B_2C}\,\Pi_{W_2}}.
\end{multline}
Combining Eqs.~(\ref{EntWit}), (\ref{minG}), and (\ref{TrF}), we obtain  sufficient condition~(\ref{EnCon}) for genuine entanglement  of $\alpha_{AB_1}\otimes\beta_{B_2C}$.
\end{proof}
\begin{remark}
Lower bounds on two GME entanglement measures, the concurrence and the convex-roof extended negativity~(CREN), can be also obtained in connection with this entanglement witness. For example, if condition~(\ref{EnCon}) holds,  Eq.~(64) from Ref.~\cite{KVAnt21} yields the bound for the CREN of the state $\alpha_{AB_1}\otimes\beta_{B_2C}$:
\begin{multline}\label{CREN}
    N_{GME}(\alpha_{AB_1}\otimes\beta_{B_2C})\\ \geqslant\frac{\Tr{\alpha_{AB_1}\,\Pi_{W_1}}\Tr{\beta_{B_2C}\,\Pi_{W_2}} + G_{12} - 1}{2\,(1 - G_{12})},
\end{multline}
where $G_{12}=\min\left(G(W_1),\,G(W_2)\right)$.
\end{remark}

\subsubsection*{Example: tensor product of two Werner states}

Consider the  Werner states family  on $\mathbb{C}^d\otimes\mathbb{C}^d$:
\begin{equation}
    \rho_{\scriptscriptstyle\mathcal W}(p,d) = \frac1{d^2 + pd}\left(I_d\otimes I_d + p\sum_{i,\,j=0}^{d-1}\,\ket{i,\,j}\bra{j,\,i}\right).
\end{equation}

In Ref.~\cite{SunChen21} it was proved that $\rho_{\scriptscriptstyle\mathcal W}(p_1,2)\otimes\rho_{\scriptscriptstyle\mathcal W}(p_2,2)$, when viewed as a tripartite state on  $\mathbb{C}^2\otimes\mathbb{C}^4\otimes\mathbb{C}^2$, is genuinely entangled in the region
\begin{equation}\label{oldDom}
    -1\,\leqslant\,p_1\,\leqslant -0.940198;\quad -1\,\leqslant\,p_2\,\leqslant\,-0.94066.
\end{equation}
With the use of Lemma~\ref{StProd} this domain can be extended.

Let us consider the tensor product $\rho_{\scriptscriptstyle\mathcal W}(p_1,d)\otimes\rho_{\scriptscriptstyle\mathcal W}(p_2,d)$ of two Werner states on $\mathbb{C}^d\otimes\mathbb{C}^d$. With the use of relations
\begin{equation}
    \Pi_{\mathcal A} = \frac{I-\mathrm{SWAP}}2;\quad\Pi_S=\frac{I+\mathrm{SWAP}}2,
\end{equation}
where $\Pi_{\mathcal A},\,\Pi_S$ -- the projectors onto the antisymmetric and the symmetric subspaces of $\mathbb{C}^d\otimes\mathbb{C}^d$ respectively, and  
\begin{equation}
    \sw = \sum_{i,\,j=0}^{d-1}\,\ket{i,\,j}\bra{j,\,i},
\end{equation}
the  operator that exchanges qudits, the Werner state itself can be rewritten as
\begin{equation}
    \rho_{\scriptscriptstyle\mathcal W}(p,d) = \frac1{d^2+pd}\left[(1+p)\Pi_S + (1-p)\Pi_{\mathcal A}\right].
\end{equation}
For our analysis it is more convenient to reparameterize it with a new variable $s$ related to $p$ as
\begin{equation}\label{sp}
    \frac{2s}{d(d-1)} = \frac{1-p}{d(p+d)},
\end{equation}
so  that
\begin{equation}\label{NpW}
   \rho_{\scriptscriptstyle\mathcal W}(s,d) = \frac{2(1-s)}{d(d+1)}\Pi_S + \frac{2s}{d(d-1)}\Pi_{\mathcal A}.
\end{equation}
Let us apply Lemma~\ref{StProd} and condition~(\ref{EnCon}) to the state $\rho_{\scriptscriptstyle\mathcal W}(s_1,d)\otimes\rho_{\scriptscriptstyle\mathcal W}(s_2,d)$, with both $W_1$ and $W_2$ chosen to be the antisymmetric subspace $\mathcal A$ of  $\mathbb{C}^d\otimes\mathbb{C}^d$, which has dimension equal to $d(d-1)/2$. It is known~\cite{Vid02} that the geometric measure $G(\mathcal{A}) = 1/2$~(see also \cite{KVAnt20}). From Eq.~(\ref{NpW}) it follows that $$\Tr{\rho_{\scriptscriptstyle\mathcal W}(s,d)\,\Pi_{\mathcal A}} = s,$$
and thus condition~(\ref{EnCon}) takes a simple form:
\begin{equation}\label{domEn}
    s_1 s_2 > \frac12.
\end{equation}
\begin{figure}[t]
\includegraphics[scale=0.55]{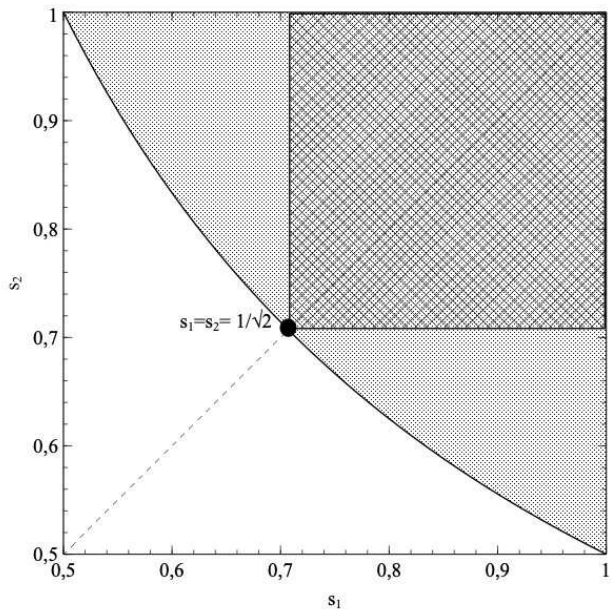}
\caption{\label{fig:plot} Shaded regions represent the domains in $s_1,\,s_2$ where the state $\rho_{\scriptscriptstyle\mathcal W}(s_1,d)\otimes\rho_{\scriptscriptstyle\mathcal W}(s_2,d)$ is genuinely entangled: the area above the graph $s_2=1/(2s_1)$ and its maximal square subdomain.}
\end{figure}
Eq.~(\ref{domEn}) defines the domain of genuine entanglement of the state $\rho_{\scriptscriptstyle\mathcal W}(s_1,d)\otimes\rho_{\scriptscriptstyle\mathcal W}(s_2,d)$~(the whole shaded area above the graph $s_2 = 1/(2s_1)$ depicted on Fig.~\ref{fig:plot}). In particular, we can specify the maximal square subdomain~(also shown on Fig.~\ref{fig:plot})
\begin{equation}
    \frac1{\sqrt 2} < s_1 \leqslant 1;\quad \frac1{\sqrt 2} < s_2 \leqslant 1,
\end{equation}
where the parameters $s_1$ and $s_2$ vary independently and where this state is GE. With the use of Eq.~(\ref{sp}), this region can be rewritten in terms of $p_1,\,p_2$. For $d=2$ we obtain
\begin{equation}
    -1\leqslant p_1,\,p_2 < 3\sqrt2 -5\approx -0.757359,
\end{equation}
which extends the domain in Eq.~(\ref{oldDom}). For larger $d$ the region becomes even wider:
\begin{equation}
  -1\leqslant p_1,\,p_2 < \frac{d(1-\sqrt2)-1}{\sqrt2 +d -1},  
\end{equation}
with the upper bound tending to $1-\sqrt2\approx -0.414213$, when $d\rightarrow\infty$.

In addition,  Eq.~(\ref{CREN}) yields a lower bound on the negativity of the state, which reads as
\begin{equation}
    N_{GME}(\rho_{\scriptscriptstyle\mathcal W}(s_1,d)\otimes\rho_{\scriptscriptstyle\mathcal W}(s_2,d)) \geqslant s_1s_2 - \frac12,
\end{equation}
or, by Eq.~(\ref{sp}),
\begin{multline}
    N_{GME}(\rho_{\scriptscriptstyle\mathcal W}(p_1,d)\otimes\rho_{\scriptscriptstyle\mathcal W}(p_2,d)) \\\geqslant
    \frac{(d-1)^2(1-p_1)(1-p_2)}{4(p_1+d)(p_2+d)} - \frac12.
\end{multline}

\subsection{Construction of multipartite NPT and distillable subspaces}

An important aspect in the tasks of quantum information processing is the possibility to extract pure entangled states from mixed ones. The states from which pure entanglement can be obtained are called distillable~\cite{BenVinSmWoot96}.

More formally, a state $\rho$ on $\hil_A\otimes\hil_B$ is $1$-distillable~(or one-copy distillable)~\cite{VSSTT00} if there exists a pure Schmidt rank $2$ bipartite state $\ket{\psi}$ such that 
\begin{equation}
    \bra{\psi}\rho^{T_A}\ket{\psi} < 0,
\end{equation}
where $T_A$ -- the transpose operation applied on subsystem $A$~(the partial transpose). Next, a state $\rho$ is $n$-distillable if $\rho^{\otimes n}$ is $1$-distillable. 

All distillable states are necessarily NPT - those  with partial transpose having at least one negative eigenvalue~(non-positive partial transpose). It is an open question whether the converse is true.

A multipartite subspace is called NPT with respect to some bipartite cut if any density operator with support in the subspace is NPT across this bipartite cut.
 Such subspaces can serve as a source of various mixed NPT states that could potentially be distillable. There are several known constructions of multipartite subspaces that are NPT with respect to certain bipartite cuts~\cite{SAS14,JLP19}. In particular, Ref.~\cite{JLP19} provides the method of construction of maximal multipartite subspaces  that are NPT across at least one bipartite cut. 
 
 In this subsection we show that  $(n+1)$-partite  subspaces that are NPT with respect to \emph{any} bipartite cut can be constructed from $n$ bipartite NPT subspaces. We call a multipartite subspace $1$-distillable across some bipartite cut if any density operator supported on the subspace is $1$-distillable across this cut.
 \begin{lemma}\label{sprodnpt}
 Let $\mathcal{S}^{(1)}_{A_1A_2},\,\mathcal{S}^{(2)}_{A_3A_4},\,\ldots,\,\mathcal{S}^{(n)}_{A_{2n-1}A_{2n}}$  be a system of $n$ bipartite NPT subspaces of tensor product Hilbert spaces $\hil_{A_1}\otimes\hil_{A_2},\,\hil_{A_3}\otimes\hil_{A_4},\,\ldots,\,\hil_{A_{2n-1}}\otimes\hil_{A_{2n}}$, respectively~($n\geqslant 2$). Let
 \begin{equation*}
     \mathcal W_{A_1 A'_2A'_3\ldots A'_n A_{2n}}\coloneqq\mathcal{S}^{(1)}_{A_1A_2}\otimes\ldots\otimes\mathcal{S}^{(n)}_{A_{2n-1}A_{2n}}
 \end{equation*}
 be a subspace of an $(n+1)$-partite tensor product Hilbert space $\hil_{A_1}\otimes\hil_{A'_2}\otimes\hil_{A'_3}\otimes\ldots\otimes\hil_{A'_n}\otimes\hil_{A_{2n}}$,
 after taking tensor products and joining subsystems $A_2$ and $A_3$, $A_4$ and $A_5$, \ldots, $A_{2n-2}$ and $A_{2n-1}$ into $A'_2=A_2A_3$, $A'_3=A_4A_5$, \ldots, $A'_n=A_{2n-2}A_{2n-1}$, respectively. Then $\mathcal W_{A_1 A'_2A'_3\ldots A'_n A_{2n}}$ is NPT across any bipartite cut. If, in addition, each of bipartite subspaces $\mathcal S$ is $1$-distillable, then $\mathcal W_{A_1 A'_2A'_3\ldots A'_n A_{2n}}$ is $1$-distillable across any bipartite cut.
 \end{lemma}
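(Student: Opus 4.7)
The plan is to reduce the question for each bipartite cut of the $(n+1)$-partite system to the already-assumed NPT property of one of the bipartite input subspaces $\mathcal S^{(i)}$. I first establish a combinatorial fact: for any bipartition $T|\bar T$ of the parties $A_1,A'_2,\ldots,A'_n,A_{2n}$ there exists an index $i_0\in\{1,\ldots,n\}$ such that $A_{2i_0-1}$ and $A_{2i_0}$ lie on opposite sides of the cut. Indeed, each pair $(A_{2k},A_{2k+1})$ with $1\leqslant k\leqslant n-1$ is forced to lie on the same side since both jointly form $A'_{k+1}$; if, in addition, every pair $(A_{2i-1},A_{2i})$ were on the same side, then alternating between these two kinds of constraints starting from $A_1$ would place all parties on a single side, contradicting the fact that $T|\bar T$ is a genuine bipartition. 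Fix such an $i_0$.

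Now let $\rho$ be a density operator supported on $\mathcal W_{A_1 A'_2\ldots A'_n A_{2n}}$, put $R=\{A_{2i_0-1},A_{2i_0}\}$, and consider the reduction $\sigma=\mathrm{Tr}_{\bar R}(\rho)$. The projector onto $\mathcal W$ factorizes as $P_{\mathcal S^{(1)}}\otimes\ldots\otimes P_{\mathcal S^{(n)}}$, and the only factor acting on $R$ is $P_{\mathcal S^{(i_0)}}$, which therefore commutes with $\mathrm{Tr}_{\bar R}$; hence $\sigma=P_{\mathcal S^{(i_0)}}\sigma P_{\mathcal S^{(i_0)}}$, so $\sigma$ is a bipartite state supported on the NPT subspace $\mathcal S^{(i_0)}$ and $\sigma^{T_{A_{2i_0-1}}}$ has a negative eigenvalue. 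Using that partial transpose on a set of parties commutes with partial trace on a disjoint set, and that tracing out a set of parties is unaffected by transposition on those same parties, one verifies
\begin{equation*}
\mathrm{Tr}_{\bar R}\!\left(\rho^{T_T}\right)=\sigma^{T_{A_{2i_0-1}}}.
\end{equation*}
If $\ket{\phi}$ is an eigenvector of $\sigma^{T_{A_{2i_0-1}}}$ with negative eigenvalue, this yields $\mathrm{Tr}\!\left((\dyad{\phi}\otimes I_{\bar R})\,\rho^{T_T}\right)<0$, proving that $\rho$ is NPT across $T|\bar T$.

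For the $1$-distillability claim the same reduction applies, but now $\ket{\phi}$ is chosen as a Schmidt-rank-$2$ witness of the $1$-distillability of $\mathcal S^{(i_0)}$ across $A_{2i_0-1}|A_{2i_0}$. Expanding $I_{\bar R}$ in a product basis aligned with the decomposition $\bar R=(T\setminus\{A_{2i_0-1}\})\cup(\bar T\setminus\{A_{2i_0}\})$ turns the inequality above into $\sum_{l,m}\bra{\Psi_{l,m}}\rho^{T_T}\ket{\Psi_{l,m}}<0$, where $\ket{\Psi_{l,m}}=\ket{\phi}\otimes\ket{\gamma_l}\otimes\ket{\delta_m}$ extends $\ket{\phi}$ by pure vectors on the two halves of $\bar R$. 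At least one term must be strictly negative, and since $\ket{\gamma_l}$ and $\ket{\delta_m}$ sit on opposite sides of the cut $T|\bar T$, the vector $\ket{\Psi_{l,m}}$ inherits Schmidt rank exactly $2$ across $T|\bar T$ from the Schmidt decomposition of $\ket{\phi}$, yielding the required witness for $\rho$. I expect the main obstacle to be the bookkeeping behind the identity $\mathrm{Tr}_{\bar R}(\rho^{T_T})=\sigma^{T_{A_{2i_0-1}}}$ together with the verification that the Schmidt-rank-$2$ property survives the product extension uniformly across all $2^{n}-1$ bipartite cuts.
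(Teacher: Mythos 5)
Your proof is correct, and it reaches the conclusion by a genuinely different mechanism than the paper's. The paper (Appendix~A, worked out for $n=2$ with the general case only sketched) fixes a product witness $\ket{\Gamma}=\ket{\Phi}_{A_1A_2}\otimes\ket{\tau}_{A_3A_4}$ and reduces $\bra{\Gamma}\rho^{T_{A_1}}\ket{\Gamma}$ to a \emph{conditional} state $\sigma\propto\bra{\tau}\rho\ket{\tau}$ supported on $\mathcal S^{(1)}$; for the middle cut $A'_2|A_1A_4$ it must additionally take $\ket{\tau}=\ket{\mu}_{A_3}\otimes\ket{\nu}_{A_4}$ and pass to $\ket{\nu^*}$ to eliminate the second transpose $T_{A_4}$. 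You instead work with the \emph{marginal} $\sigma=\mathrm{Tr}_{\bar R}(\rho)$ on the split pair $R=\{A_{2i_0-1},A_{2i_0}\}$, note that it is supported on $\mathcal S^{(i_0)}$ because the projector onto $\mathcal W$ factorizes, and use the fact that transposes on traced-out subsystems are invisible to the partial trace, so only the single transpose $T_{A_{2i_0-1}}$ survives in $\mathrm{Tr}_{\bar R}(\rho^{T_T})=\sigma^{T_{A_{2i_0-1}}}$. This sidesteps the conjugation trick entirely and, combined with your explicit combinatorial lemma that every nontrivial cut of the joined system must separate some pair $(A_{2i_0-1},A_{2i_0})$ (a point the paper leaves implicit and only gestures at for $n>2$), treats all $2^n-1$ cuts of the general case uniformly. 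The price is the extra step of expanding $I_{\bar R}$ in a product basis aligned with the cut so as to extract a pure Schmidt-rank-$2$ witness from the operator witness $\dyad{\phi}\otimes I_{\bar R}$, which you handle correctly. One cosmetic point worth stating in a final write-up: if the cut places $A_{2i_0-1}$ on the $\bar T$ side, the surviving transpose is $T_{A_{2i_0}}$; this is fixed by relabeling, or by conjugating the bipartite witness in the computational basis, which changes neither the sign of the expectation value nor the Schmidt rank.
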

 See Appendix~\ref{app:lemp} for the proof.

\subsubsection*{Example: construction of a tripartite subspace $1$-distillable across any bipartite cut}
 We construct this example from tensor product of two $1$-distillable bipartite subspaces. To find such bipartite subspaces, we use the argument from Ref.~\cite{AgHalBa19} which combines the results of Refs.~\cite{NJohn13,ChenDjok16}. In Ref.~\cite{NJohn13} it was shown that for a bipartite $\mathbb C^{d_1}\otimes\mathbb C^{d_2}$ system  NPT  subspaces of dimension up to $(d_1-1)(d_2-1)$ can be constructed. The NPT subspace $\mathcal S$ of maximal dimension reads as
 \begin{multline}\label{nptex}
     \mathcal S\coloneqq\mathrm{span}\{\ket{j}\ket{k+1} - \ket{j+1}\ket{k}\},\\
     0\leqslant j\leqslant d_1-2,\quad 0\leqslant k\leqslant d_2-2.
 \end{multline}
 (Theorem~1 of Ref.~\cite{NJohn13}).
 
 Next, in Ref.~\cite{ChenDjok16} it was shown that any rank 4 NPT state is $1$-distillable, which, combined with the results of Ref.~\cite{ChenDjok11}, means that all NPT states of rank \emph{at most} 4 are $1$-distillable. Therefore, bipartite subspaces~(\ref{nptex}) of dimensions up to $4$  are $1$-distillable.
 
 Using the above facts, we can take a subspace $\mathcal S$ of Eq.~(\ref{nptex})  with $d_1=d_2=3$, such that $\dim(\mathcal S)=4$.  Let $\mathcal W$ denote a subspace obtained from the tensor product $\mathcal S\otimes\mathcal S$ of $\mathcal S$ with itself, with subsequent joining the two adjacent subsystems. $\mathcal W$ is hence a $16$-dimensional subspace of a tripartite $3\otimes 9\otimes 3$ Hilbert space. According to Lemma~\ref{sprodnpt}, $\mathcal W$ is $1$-distillable across any of the three bipartite cuts. 
 
 The subspace $\mathcal W$ is spanned by the system of $16$ vectors obtained from all possible tensor products of vectors from $\mathcal S$ with each other. After taking the tensor products the two adjacent subsystems are to be joined according to the lexicographic order:
 \begin{multline}
     \ket{0}\ket{0}\rightarrow \ket{0},\quad\ket{0}\ket{1}\rightarrow\ket{1},\\
     \ldots,\quad\ket{2}\ket{2}\rightarrow\ket{8},
 \end{multline}
 or, more generally,
 \begin{equation}\label{order}
    \ket{i}\ket{j}\rightarrow \ket{3i+j}.
 \end{equation}
The tensor product of two vectors from~(\ref{nptex}) 
\begin{equation}
    \left(\ket{j}\ket{k+1} - \ket{j+1}\ket{k}\right)\otimes\left(\ket{l}\ket{m+1} - \ket{l+1}\ket{m}\right),
\end{equation}
indexed by $(j,\,k)$ and $(l,\,m)$ respectively,
yields, by Eq.~(\ref{order}),  a generic vector from the system of vectors spanning $\mathcal W$:
\begin{multline}
    \ket{j}\ket{3(k+1)+l}\ket{m+1}-\ket{j}\ket{3(k+1)+l+1}\ket{m}\\
    -\ket{j+1}\ket{3k+l}\ket{m+1}+\ket{j+1}\ket{3k+l+1}\ket{m},\\
    0\leqslant j,\,k,\,l,\,m\leqslant1.
\end{multline}

\subsection{Entanglement criterion}

Corollary~\ref{corent} can be combined with some known results to give entanglement conditions for  mixed states supported on tensor products. We give one such example using the result of Ref.~\cite{DRA21}, a simple sufficient condition for a subspace to be completely entangled:
\begin{theorem}[Ref.~\cite{DRA21}]\label{exttheor}
Let $V$ be a subspace spanned by $k$ pairwise orthogonal pure bipartite states $\{\ket{\phi_i}\}$ such that
\begin{equation}
    \sum_{i=1}^k\,G(\ket{\psi_i}) - (k-1) > 0,
\end{equation}
where $G$ -- the geometric measure of entanglement. Then $V$ is a completely entangled subspace.
\end{theorem}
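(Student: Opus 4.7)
The plan is to argue by contradiction: suppose some normalized $\ket{\chi}\in V$ factorizes as $\ket{\alpha}\otimes\ket{\beta}$, and derive an arithmetic impossibility from the hypothesis $\sum_i G(\ket{\phi_i})>k-1$. Expanding $\ket{\chi}=\sum_{i=1}^k c_i\ket{\phi_i}$ in the given orthonormal spanning set gives $\sum_i|c_i|^2=1$, while separability means $|\langle\alpha\beta|\chi\rangle|^2=1$.

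First I would apply the Cauchy--Schwarz inequality to $\langle\alpha\beta|\chi\rangle=\sum_i c_i\langle\alpha\beta|\phi_i\rangle$, which combined with $\sum_i|c_i|^2=1$ yields
$$1=|\langle\alpha\beta|\chi\rangle|^2\leq\sum_{i=1}^k|\langle\alpha\beta|\phi_i\rangle|^2.$$
The crucial ingredient is then the standard observation that for any bipartite pure state $\ket{\phi_i}$ the Schmidt decomposition forces $\max_{\ket{\alpha},\ket{\beta}}|\langle\alpha\beta|\phi_i\rangle|^2=\lambda^{(i)}_{\max}=1-G(\ket{\phi_i})$, directly from the definition in Eq.~(\ref{geom}). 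Substituting termwise,
$$\sum_{i=1}^k|\langle\alpha\beta|\phi_i\rangle|^2\leq k-\sum_{i=1}^k G(\ket{\phi_i})<k-(k-1)=1,$$
where the strict inequality is exactly the hypothesis. The two estimates together yield $1<1$, so no product vector can lie in $V$, and $V$ is completely entangled.

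The main (and really the only) obstacle is noticing that the geometric measure $G(\ket{\phi_i})$ directly controls the maximal overlap of $\ket{\phi_i}$ with an arbitrary product state; once that link is in place, the proof is a one-line Cauchy--Schwarz bound followed by arithmetic. No channel theory, dimension counting, or diagrammatic reasoning is needed here, which is precisely what makes the criterion convenient to pair with Corollary~\ref{corent} to produce an entanglement witness for mixed states supported on tensor products.
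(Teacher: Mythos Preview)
Your argument is correct. The contradiction via Cauchy--Schwarz together with the identity $\max_{\ket{\alpha},\ket{\beta}}|\langle\alpha\beta|\phi_i\rangle|^2=1-G(\ket{\phi_i})$ is exactly the intended mechanism, and every step is sound (note that ``pairwise orthogonal pure states'' are automatically orthonormal, so $\sum_i|c_i|^2=1$ is justified).

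There is nothing to compare against in the present paper: Theorem~\ref{exttheor} is quoted from Ref.~\cite{DRA21} without proof and used only as a black box to feed into Corollary~\ref{corent}. Your proof is essentially the one given in that reference, so no alternative route is at play here.
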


Combining it with Corollary~\ref{corent}, we obtain some sort of an entanglement criterion.

\begin{lemma}
Let $\rho =\sum_{i=1}^n\,\dyad{\psi_i}$ be a density operator on a tripartite tensor product Hilbert space $\hil_A\otimes\hil_B\otimes\hil_C$, where each state $\ket{\psi_i}$ is obtained from tensor product $\ket{\phi_i}_{AB_1}\otimes\ket{\chi_i}_{B_2C}$ of pure states $\ket{\phi_i}_{AB_1}\in\hil_A\otimes\hil_{B_1}$ and $\ket{\chi_i}_{B_2C}\in\hil_{B_2}\otimes\hil_C$, with subsequent joining  subsystems $B_1$ and $B_2$ into $B$. Suppose that each $\ket{\phi_i}_{AB_1}$ is entangled. Suppose that  $\{\ket{\chi_i}_{B_2C}\}$ are mutually orthogonal and such that
\begin{equation*}
    \sum_{i=1}^n\,G(\ket{\chi_i}) - (n-1) > 0.
\end{equation*}
Then $\rho$ is a genuinely entangled state.
\end{lemma}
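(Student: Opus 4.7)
My plan is to chain together Theorem~\ref{exttheor} and Corollary~\ref{corent}, then invoke the standard fact that any mixed state supported on a genuinely entangled subspace is itself genuinely entangled.

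First, I would focus on the bipartite factors $\{\ket{\chi_i}_{B_2C}\}$. They are mutually orthogonal by hypothesis, and they satisfy the geometric-measure inequality $\sum_{i=1}^n G(\ket{\chi_i}) - (n-1) > 0$. These are exactly the premises of Theorem~\ref{exttheor}, so their linear span
\begin{equation*}
\mathcal G_{B_2C} \coloneqq \mathrm{span}\{\ket{\chi_1}_{B_2C},\,\ldots,\,\ket{\chi_n}_{B_2C}\}
\end{equation*}
is a completely entangled subspace of $\hil_{B_2}\otimes\hil_C$.

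Second, I combine this with the other hypothesis that every $\ket{\phi_i}_{AB_1}$ is entangled. This places us exactly in the setting of Corollary~\ref{corent}, from which I conclude that the vectors $\{\ket{\phi_i}_{AB_1}\otimes\ket{\chi_i}_{B_2C}\}$, after joining $B_1$ and $B_2$ into $B$, span a genuinely entangled subspace $\mathcal W$ of $\hil_A\otimes\hil_B\otimes\hil_C$. In particular, every unit vector in $\mathcal W$ is genuinely entangled across every bipartite cut.

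Third, by construction $\rho = \sum_{i=1}^n \dyad{\psi_i}$ has range contained in $\mathcal W$, so the support of $\rho$ lies in $\mathcal W$. In any ensemble decomposition $\rho = \sum_j p_j\dyad{\eta_j}$, each $\ket{\eta_j}$ must lie in the support of $\rho$ and hence in $\mathcal W$, so each $\ket{\eta_j}$ is genuinely entangled. Therefore $\rho$ admits no decomposition into biseparable pure states, which means $\rho$ is genuinely entangled.

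There is no real obstacle here; the statement is obtained by direct composition of the previously established results. The only points worth verifying carefully are that the spanning vectors in Corollary~\ref{corent} need not be assumed linearly independent for the support argument to go through (the support of $\rho$ is contained in their span regardless), and that Theorem~\ref{exttheor} is applied to the $\ket{\chi_i}$'s only, which is legitimate because the orthogonality and geometric-measure hypotheses are imposed exactly on this side of the tensor product.
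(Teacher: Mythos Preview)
Your proof is correct and follows exactly the same route as the paper's: apply Theorem~\ref{exttheor} to the $\ket{\chi_i}$'s to get a CES, feed this into Corollary~\ref{corent} to conclude that the $\ket{\psi_i}$'s span a GES, and then observe that $\rho$ is supported on this GES and hence genuinely entangled. You have simply spelled out in more detail the two-sentence argument the paper gives.
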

\begin{proof}
By Corollary~\ref{corent} and Theorem~\ref{exttheor} the states $\{\ket{\psi_i}\}$ span a GES. As a state supported on a GES, $\rho$ is genuinely entangled.
\end{proof}

\section{Discussion}\label{sec::disc}

We have presented several properties of genuinely entangled subspaces obtained from the tensor product structure.

The advantage of such a construction is the possibility to control such useful characteristics of states supported on the output GESs as various measures of entanglement, distillability across some or all bipartite cuts, robustness of entanglement under mixing with external noise~(not covered here, but it easily follows from Eqs.~(68)-(71) of Ref.~\cite{KVAnt21}). In particular, highly entangled subspaces can be generated in this way. In addition, if a tripartite GES is constructed from two CESs with given geometric measures of entanglement, and one of them corresponds to an entanglement breaking channel, then, according to Remark on page~\pageref{remchan}, the exact values of the geometric  measure across all three bipartite cuts are known for the resulting GES.

It has also been shown that, under certain conditions, GESs can be obtained from the direct sum of tensor products of bipartite CESs~(Lemma~\ref{sumprod}). Such a structure reminds of the inner product of vectors in the Euclidean space, although here in Lemma~\ref{sumprod} the conditions are not symmetric with respect to the left and the right subspaces in tensor products. In addition, as it was shown in Ref.~\cite{KVAnt21}, the scheme of Fig.~\ref{fig:iso23}, used in the proof of the lemma, cannot generate GESs of maximal possible dimensions, although the dimensions of output GESs  asymptotically approach the maximal ones when local dimensions of subsystems are high. Therefore, the construction of Lemma~\ref{sumprod} doesn't generate maximal GESs either. A possible direction of further research can be the generalization of Lemma~\ref{sumprod} with the aim to obtain more symmetric conditions on bipartite subspaces as well as conditions sufficient for construction of maximal GESs.

\begin{acknowledgments}
The author thanks M. V. Lomonosov Moscow State University for supporting this work.
\end{acknowledgments}

\appendix

\section{\label{app:lemp}Proof of Lemma~\ref{sprodnpt}}

\begin{proof}
 We prove the lemma for $n=2$, the case of arbitrary $n$ can be considered in a similar way.
 
 Let $\rho$ be a density operator  supported on 
 $\mathcal W_{A_1 A'_2A_4}=\mathcal{S}^{(1)}_{A_1A_2}\otimes\mathcal{S}^{(2)}_{A_3A_4}$, where $A'_2=A_2A_3$~(we use $A'_2$ and $A_2A_3$ interchangeably), so that $\rho$ has an ensemble decomposition
 \begin{equation}\label{dec}
     \rho = \sum_i\,p_i\,\dyad{\psi_i}_{A_1A'_2A_4},
 \end{equation}
 with $\ket{\psi_i}_{A_1A'_2A_4}$ being decomposed as
 \begin{equation}\label{vecdec}
    \ket{\psi_i}_{A_1A'_2A_4} = \sum_{jk}\,c^{(i)}_{jk}\,\ket{\phi_j}_{A_1A_2}\otimes\ket{\chi_k}_{A_3A_4}, 
 \end{equation}
 where $\ket{\phi_j}_{A_1A_2}\in\mathcal{S}^{(1)}_{A_1A_2}$, $\ket{\chi_k}_{A_3A_4}\in\mathcal{S}^{(2)}_{A_3A_4}$, and $c^{(i)}_{jk}\in\mathbb C$.
 
 For the bipartite cut $A_1|A'_2A_4$ we choose the partial transpose to act on subsystem $A_1$. We want to show that there is a pure state $\ket{\Gamma}\in\hil_{A_1}\otimes\hil_{A'_2}\otimes\hil_{A_4}$ such that
 \begin{equation}\label{npt}
     \bra{\Gamma}\rho^{T_{A_1}}\ket{\Gamma} < 0.
 \end{equation}
 We can take $\ket{\Gamma}$ to have structure
 \begin{equation}\label{gammarepr}
     \ket{\Gamma}_{A_1A_2A_3A_4} = \ket{\Phi}_{A_1A_2}\otimes\ket{\tau}_{A_3A_4},
 \end{equation}
 (before joining $A_2$ and $A_3$), with some pure states $\ket{\Phi}_{A_1A_2}\in\hil_{A_1}\otimes\hil_{A_2}$, $\ket{\tau}_{A_3A_4}\in\hil_{A_3}\otimes\hil_{A_4}$. Now, for each term in decomposition~(\ref{dec}), it can be noted that  in expression
 \begin{multline}\label{transfppt}
     \bra{\Gamma}\left(\dyad{\psi_i}_{A_1A'_2A_4}\right)^{T_{A_1}}\ket{\Gamma}\\
     =\bra{\Phi}\otimes\bra{\tau}\left(\dyad{\psi_i}_{A_1A'_2A_4}\right)^{T_{A_1}}\ket{\Phi}_{A_1A_2}\otimes\ket{\tau}_{A_3A_4}
 \end{multline}
 the operations $T_{A_1}$ and scalar product with $\ket{\tau}_{A_3A_4}$ can be taken independently~(as acting on different subsystems). So, first  taking a partial scalar product of $\ket{\psi_j}$ with $\ket{\tau}$, with the use of Eq.~(\ref{vecdec}) we obtain
 \begin{multline}
     \bra{\tau}_{A_3A_4}\ket{\psi_i}_{A_1A'_2A_4}=\sum_{jk}\,c^{(i)}_{jk}\,\ket{\phi_j}_{A_1A_2} \bra{\tau}\ket{\chi_k}_{A_3A_4}\\
     =\sum_j\,\Tilde{c}^{(i)}_j\ket{\phi_j}_{A_1A_2}=n_i\ket{\eta_i}_{A_1A_2},
 \end{multline}
 where $\ket{\eta_i}_{A_1A_2}$ -- some normalized state from the subspace $\mathcal{S}^{(1)}_{A_1A_2}$ and $n_i>0$ -- the corresponding normalization constant. Now the left part of Eq.~(\ref{npt}) can be written as
 \begin{equation}
     \bra{\Gamma}\rho^{T_{A_1}}\ket{\Gamma} = c\bra{\Phi}\sigma^{T_{A_1}}\ket{\Phi}_{A_1A_2},
\end{equation}
where
\begin{equation}\label{sig}
    \sigma = \sum_i\,\Tilde{p_i}\dyad{\eta_i}_{A_1A_2},
\end{equation}
a state entirely supported on $\mathcal{S}^{(1)}_{A_1A_2}$, with
\begin{equation}
    \Tilde{p_i}=\frac{n_i^2 p_i}c,\quad c = \sum_i\,n_i^2 p_i.
\end{equation}
Since the state $\sigma$ is NPT, choosing in Eq.~(\ref{gammarepr}) the state $\Phi$ such that
\begin{equation}\label{partnpt}
\bra{\Phi}\sigma^{T_{A_1}}\ket{\Phi}_{A_1A_2}<0,
\end{equation}
we obtain the state $\ket{\Gamma}$ for which condition~(\ref{npt}) is satisfied, and this shows that $\mathcal W_{A_1 A'_2A_4}$ is NPT across bipartite cut $A_1|A'_2A_4$.
\begin{figure}[t]
\includegraphics[scale=0.25]{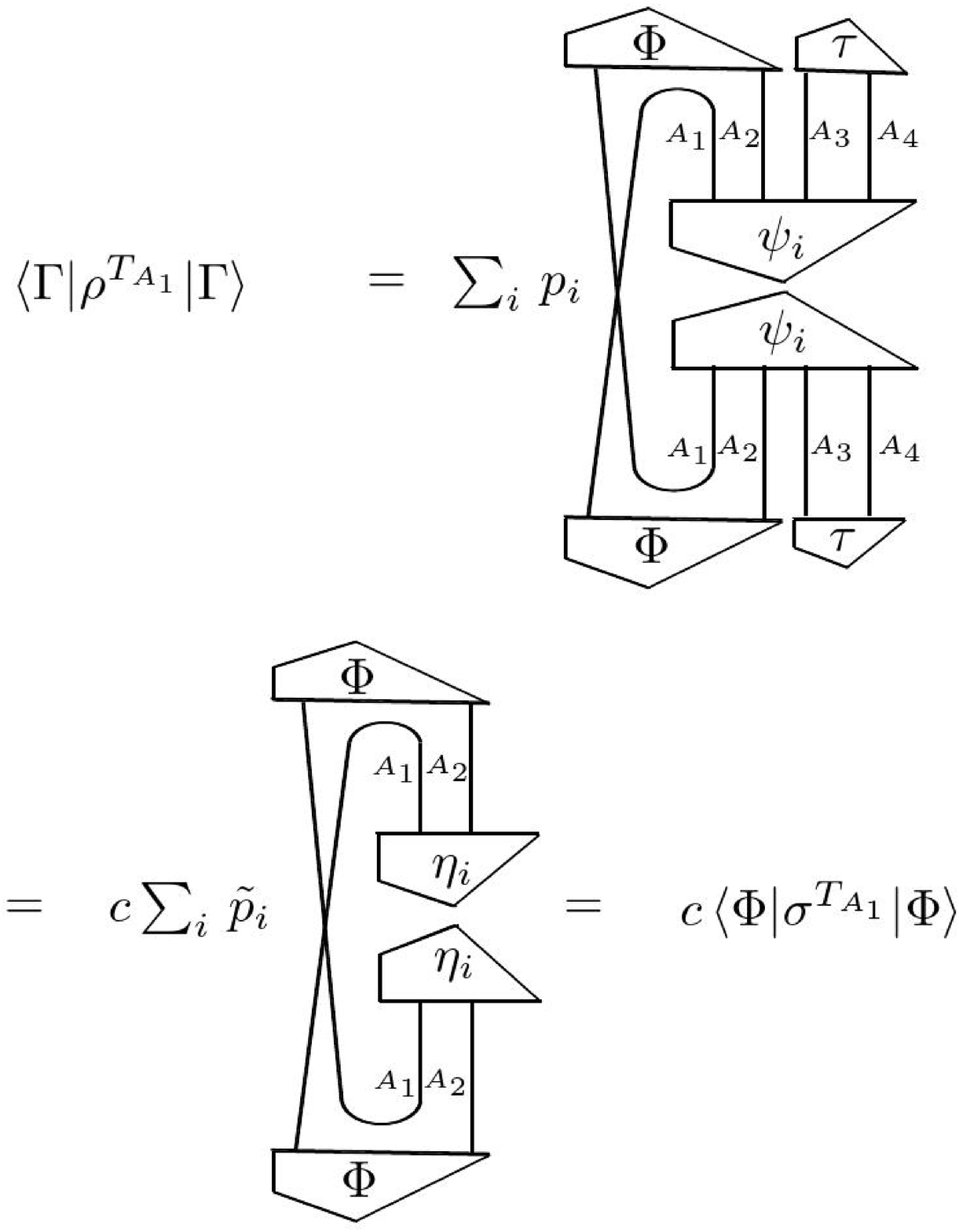}
\caption{\label{fig:ppt1} The operations of partial transpose $T_{A_1}$~(depicted by bending of lines)  of each projector $\dyad{\psi_i}$ and partial scalar product of $\ket{\tau}$ with $\ket{\psi_i}$ are performed independently. The resulting states from partial scalar product, $\{\ket{\eta_i}\}$, belong to $\mathcal{S}^{(1)}_{A_1A_2}$, and the density operator $\sigma=\sum_i\,\Tilde{p_i}\dyad{\eta_i}$ is entirely supported on $\mathcal{S}^{(1)}_{A_1A_2}$.}
\end{figure}

The reasoning in Eqs.~(\ref{transfppt})-(\ref{sig}) can be conveniently represented diagrammatically, as shown on Fig.~\ref{fig:ppt1}.

If, in addition, subspace $\mathcal{S}^{(1)}_{A_1A_2}$ is $1$-distillable, then there exists a Schmidt rank $2$ state $\ket{\Phi}_{A_1A_2}$ such that condition~(\ref{partnpt}) is satisfied. Using this state in Eq.~(\ref{gammarepr}), we construct a Schmidt rank $2$ state $\ket{\Gamma}$~(again, after joining $A_2$ and $A_3$) such that condition~(\ref{npt}) is satisfied, thus proving $1$-distillability of $\mathcal W_{A_1 A'_2A_4}$ across bipartite cut $A_1|A'_2A_4$. 

The same holds for bipartite cut $A_1A'_2|A_4$~(subspaces $\mathcal{S}^{(1)}_{A_1A_2}$ and ${S}^{(2)}_{A_3A_4}$ enter the lemma symmetrically).

Consider now bipartite cut $A'_2|A_1A_4$. This time we choose the partial transpose to act on joint subsystem $A_1A_4$. This operation reduces to taking transposes on subsystems $A_1$ and $A_4$ independently: $T_{A_1A_4}=T_{A_1}\otimes T_{A_4}$.

For the state $\ket{\Gamma}$ we can take the structure~(\ref{gammarepr}) requiring the state $\ket{\tau}_{A_3A_4}$ to be a product state: 
\begin{equation}\label{prodstr}
    \ket{\tau}_{A_3A_4} = \ket{\mu}_{A_3}\otimes\ket{\nu}_{A_4},
\end{equation}
with some pure states $\ket{\mu}_{A_3}\in\hil_{A_3}$ and $\ket{\nu}_{A_4}\in\hil_{A_4}$.

Now, for each term in Eq.~(\ref{dec}), the partial scalar product of $\ket{\tau}$ with the transposed projector $\dyad{\psi_i}$ can be written as
\begin{multline}
\bra{\tau}\left(\dyad{\psi_i}_{A_1A'_2A_4}\right)^{T_{A_1}\otimes T_{A_4}}\ket{\tau}_{A_3A_4}\\
=\bra{\mu}\otimes\bra{\nu}\left(\dyad{\psi_i}_{A_1A'_2A_4}\right)^{T_{A_1}\otimes T_{A_4}}\ket{\mu}_{A_3}\otimes\ket{\nu}_{A_4}\\
=\bra{\mu}\otimes\bra{\nu^*}\left(\dyad{\psi_i}_{A_1A'_2A_4}\right)^{T_{A_1}}\ket{\mu}_{A_3}\otimes\ket{\nu^*}_{A_4},
\end{multline}
where we took advantage of the product structure~(\ref{prodstr}) to eliminate the second transpose operation $T_{A_4}$~(see also Fig.~\ref{fig:ppt2}). Here $\ket{\nu^*}$ denotes the vector with components equal to complex conjugated components of the vector $\ket{\nu}$ with respect to the computational basis. 
\begin{figure}[b]
\includegraphics[scale=0.25]{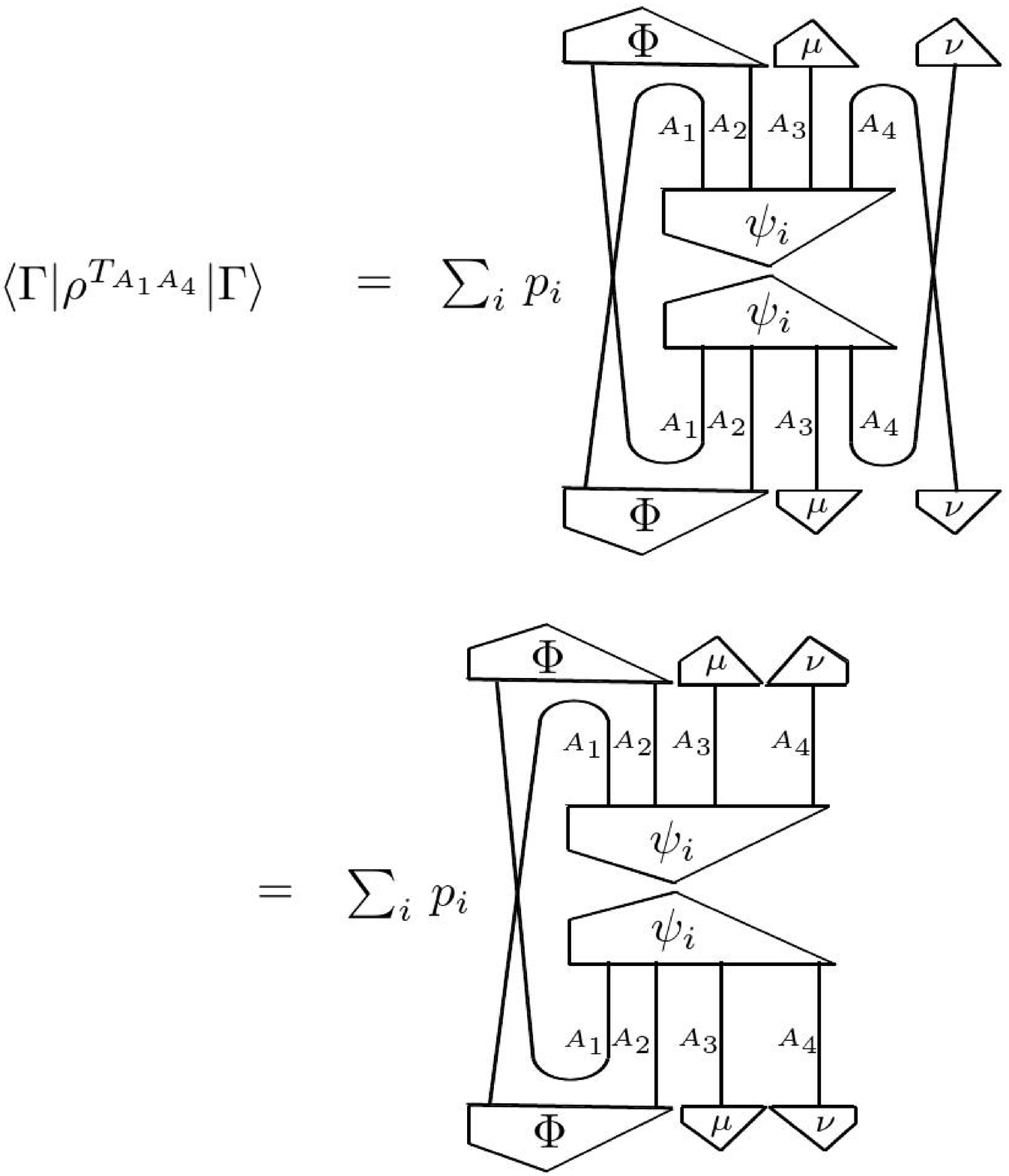}
\caption{\label{fig:ppt2} The second transpose operation, $T_{A_4}$, can be eliminated on a product state $\ket{\tau}_{A_3A_4} = \ket{\mu}_{A_3}\otimes\ket{\nu}_{A_4}$. The rest of the calculations are analogous to those on the diagram of Fig.~\ref{fig:ppt1}.}
\end{figure}

Now it can be easily seen that this case is reduced to the previous one of bipartite cut $A_1|A'_2A_4$ with the state $\ket{\tau}$ replaced with $\ket{\mu}_{A_3}\otimes\ket{\nu^*}_{A_4}$: we can repeat the reasoning starting from Eq.~(\ref{transfppt}) on and obtain that $\mathcal W_{A_1 A'_2A_4}$ is NPT across bipartite cut $A'_2|A_1A_4$. If, in addition, subspace $\mathcal{S}^{(1)}_{A_1A_2}$ is $1$-distillable, then $\mathcal W_{A_1 A'_2A_4}$ is $1$-distillable across  $A'_2|A_1A_4$.

When $n>2$, each possible bipartite cut can be analyzed similarly: choosing appropriate product  structure of the state $\ket{\Gamma}$, we reduce the case with many  transposes acting on different subsystems to the situation where there is only one partial transpose acting on some state that is entirely supported on one of the subspaces $\mathcal S$, then repeat the above reasoning. 
 \end{proof}

\nocite{*}

\bibliography{refs}

\end{document}